\titlespacing{\paragraph}{%
  0pt}{
  0.3\baselineskip}{
  1em}%
\DeclareMathOperator*{\argmax}{argmax}
\DeclareMathOperator*{\argmin}{argmin}
\newtheorem{theorem}{Theorem}[section]
\newtheorem{lemma}[theorem]{Lemma}
\newtheorem{corollary}[theorem]{Corollary}
\newcommand{\curvf}[1]{\ensuremath{\kappa_{#1}}}
\newcommand{\eax}[1]{\ensuremath{\sqrt{w^{#1}}}}
\newcommand{\eaxa}[2]{\ensuremath{\sqrt{w^{#1}(#2)}}}
\newcommand{\perm}{\ensuremath{\pi}}
\providecommand{\doarxiv}{true}
\newcommand{\arxiv}[1]{#1}
\newcommand{\notarxiv}[1]{}
\newcommand{\arxiv}[1]{}
\newcommand{\notarxiv}[1]{#1}
\newcommand{\arxivalt}[2]{\ifthenelse{\boolean{isarxiv}}{#1}{#2}}
\newcommand{\arxivaltr}[2]{\ifthenelse{\boolean{isarxiv}}{#2}{#1}}
\newcommand{\narxiv}[1]{\notarxiv{#1}}
\newcommand{\truncated}{curve-normalized}
\newcommand{\myaddcomment}[3]{{\color{#1}{\ensuremath{\langle\!\!\langle}{\bf {#2} :} {#3}\ensuremath{\rangle\!\!\rangle}}}}
\newcommand{\rishabh}[1]{\myaddcomment{orange}{Rishabh}{#1}}
\newcommand{\JTR}[1]{\myaddcomment{orange}{Jeff\ensuremath{\rightarrow}Rishabh}{#1}}
\newcommand{\jeff}[1]{\myaddcomment{blue}{Jeff}{#1}}
\newcommand{\RTJ}[1]{\myaddcomment{blue}{Rishabh\ensuremath{\rightarrow}Jeff}{#1}}
\newcommand{\rishabh}[1]{}
\newcommand{\JTR}[1]{}
\newcommand{\jeff}[1]{}
\newcommand{\RTJ}[1]{}
\newcommand{\toboth}[1]{}
\title{Submodular Optimization with Submodular Cover and Submodular Knapsack Constraints}
\author{ {\bf Rishabh Iyer} \\
Dept. of Electrical Engineering\\  
University of Washington\\ 
Seattle, WA-98175, USA
\and 
{\bf Jeff Bilmes} \\ 
Dept. of Electrical Engineering \\  
University of Washington\\ 
Seattle, WA-98175, USA
} 
\begin{document}

\maketitle

\begin{abstract}
  We investigate two new optimization problems --- minimizing a
  submodular function subject to a submodular lower bound constraint
  (submodular cover) and maximizing a submodular function subject to a
  submodular upper bound constraint (submodular knapsack). We are
  motivated by a number of real-world applications in machine learning
  including sensor placement and data subset selection, which require
  maximizing a certain submodular function (like coverage or
  diversity) while simultaneously minimizing another (like cooperative
  cost). These problems are often posed as minimizing the difference
  between submodular functions~\cite{rkiyeruai2012,narasimhanbilmes}
  which is in the worst case inapproximable. We show, however, that by
  phrasing these problems as constrained optimization, which is more
  natural for many applications, we achieve a number of bounded
  approximation guarantees. We also show that both these problems are
  closely related and an approximation algorithm solving one can be
  used to obtain an approximation guarantee for the other. We provide
  hardness results for both problems thus showing that our
  approximation factors are tight up to $\log$-factors. Finally, we
  empirically demonstrate the performance and good scalability
  properties of our algorithms.\looseness-1
\end{abstract}

\section{Introduction}
\label{sec:introduction}

A set function $f: 2^V \to \mathbb{R}$ is said to be \emph{submodular}
\cite{fujishige2005submodular} if for all subsets $S, T \subseteq V$,
it holds that $f(S) + f(T) \geq f(S \cup T) + f(S \cap T)$.  Defining
$f(j | S) \triangleq f(S \cup j) - f(S)$ as the gain of $j\in V$ in
the context of $S \subseteq V$, then $f$ is submodular if and only
if 
$f(j | S) \geq f(j | T)$ for all $S \subseteq T$ and $j \notin T$. The
function $f$ is monotone iff $f(j | S) \geq 0, \forall j \notin S, S
\subseteq V$.  For convenience, we assume the ground set is $V = \{1,
2, \cdots, n\}$. While general set function optimization is often
intractable, many forms of submodular function optimization can be
solved near optimally or even optimally in certain cases\arxivalt{,
  and hence submodularity is also often called the discrete analog of
  convexity~\cite{lovasz1983}.}{.} Submodularity, moreover, is 
inherent in a large class of real-world applications, particularly in
machine learning, therefore making them extremely useful in practice.\looseness-1

In this paper, we study a new class of discrete
optimization problems that have the following form:
\begin{align*}
\label{eqn:scsc_scsk_def}
\mbox{Problem 1 (SCSC): } \min \{f(X) \, | \,g(X) \geq c\},\quad
\text{and}
\quad
\mbox{Problem 2 (SCSK): } \max \{ g(X) \,| \,f(X) \leq b\},
\end{align*}
where $f$ and $g$ are monotone non-decreasing submodular functions
that also, w.l.o.g., are normalized ($f(\emptyset) = g(\emptyset) =
0$)\footnote{A monotone non-decreasing normalized ($f(\emptyset) = 0$) submodular function is called a polymatroid function.}, and where $b$ and $c$ refer to budget and cover parameters
respectively. The corresponding constraints are called the submodular cover~\cite{wolsey1982analysis} and submodular knapsack~\cite{atamturk2009submodular}
 respectively and hence 
we refer to Problem 1 as {\em Submodular Cost Submodular
  Cover} (henceforth SCSC) and Problem 2 as {\em Submodular Cost Submodular
  Knapsack} (henceforth SCSK). Our motivation stems from an interesting class of problems that
require minimizing a certain submodular function $f$ while
simultaneously maximizing another submodular function $g$. We shall
see that these naturally occur in applications like sensor placement,
data subset selection, and many other machine learning applications. A standard approach used in
literature~\cite{rkiyeruai2012, narasimhanbilmes, kawahara2011prismatic} has been to transform these problems into
minimizing the difference between submodular functions (also called DS
optimization):\looseness-1
\begin{align}
\mbox{Problem 0: } \min_{X \subseteq V} \bigl( f(X) - g(X)\bigr).  
\end{align}
While a number of heuristics are available for solving Problem
0\arxiv{, and while these heuristics often work well in
  practice~\cite{rkiyeruai2012,narasimhanbilmes}}, in the worst-case
it is NP-hard and inapproximable~\cite{rkiyeruai2012}, even when $f$ and $g$ are monotone. Although an exact
branch and bound algorithm has been provided for this
problem~\cite{kawahara2011prismatic}, its complexity can be
exponential in the worst case.  On the other hand, in many
applications, one of the submodular functions naturally serves as part
of a constraint.  For example, we might have a budget on a cooperative
cost, in which case Problems 1 and 2 become applicable.
\arxiv{\subsection{Motivation}}
The utility of Problems 1 and 2 become apparent when we consider how
they occur in real-world applications and how they subsume a number of
important optimization problems.\looseness-1

\textbf{Sensor Placement and Feature Selection: }Often,
the problem of choosing sensor locations \arxiv{$A$ from a given set of
possible locations $V$}can be
modeled~\cite{krause2008near, rkiyeruai2012} by
maximizing the mutual information between the chosen variables $A$ and
the unchosen set $V \backslash A$ (i.e.,$f(A) = I(X_A; X_{V
  \backslash A})$). \arxiv{Note that, while the symmetric mutual information is not monotone, it can be shown to approximately monotone~\cite{krause2008near}. }Alternatively, we may wish to maximize the mutual
information between a set of chosen sensors $X_A$ and a quantity
of interest $C$ (i.e., $f(A) = I(X_A ; C)$) assuming that
the set of features $X_A$ are conditionally independent given $C$
\cite{krause2008near, rkiyeruai2012}. Both these functions are submodular. Since there are costs involved, we want to simultaneously minimize the cost $g(A)$. Often this cost is submodular~\cite{krause2008near, rkiyeruai2012}. For
example, there is typically a discount when purchasing sensors in bulk
(economies of scale). \arxivalt{Moreover, there may be diminished cost for
placing a sensor in a particular location given placement in certain
other locations (e.g., the additional equipment needed to install a
sensor in, say, a precarious environment could be re-used for multiple
sensor installations in similar environments). Hence this becomes a
form of Problem 2 above. An alternate view of this problem is to find
a set of sensors with minimal cooperative cost, under a constraint
that the sensors cover a certain fraction of the possible locations,
naturally expressed as Problem 1.}{This then becomes a form of either Problem 1 or 2.\looseness-1 }

\textbf{Data subset selection:} A data subset
selection problem in speech and NLP involves finding a limited
vocabulary which simultaneously has a large coverage. This is
particularly useful, for example in speech recognition and machine
translation, where the complexity of the algorithm is determined by
the vocabulary size. The motivation for this problem is to find the
subset of training examples which will facilitate evaluation of
prototype systems~\cite{lin11}. \arxivalt{This problem then occurs
in the form of Problem 1, where we want to find a small vocabulary
subset (which is often submodular~\cite{lin11}), subject to a
constraint that the subset acoustically spans the entire data set
(which is also often submodular~\cite{lin2009select, linbudget}). This can also be phrased as Problem 2, where we ask for maximizing the
acoustic coverage and diversity subject to a bounded vocabulary size
constraint.}{Often the objective functions encouraging small vocabulary subsets and large acoustic spans are submodular~\cite{lin11, lin2009select} and hence this problem can naturally be cast as an instance of Problems 1 and 2.}




\textbf{Privacy Preserving Communication: } Given a set of random
variables $X_1, \cdots, X_n$, denote $\mathfrak I$ as an information
source, and $\mathfrak P$ as private information that should be
filtered out. Then one way of formulating the problem of choosing a
information containing but privacy preserving set of random variables
can be posed as instances of Problems 1 and 2, with $f(A) = H(X_A|
\mathfrak I)$ and $g(A) = H(X_A| \mathfrak P)$, where $H(\cdot|\cdot)$
is the conditional entropy. \arxiv{An alternative strategy would be to
  formulate the problem with $f(A) = H(X_A) + H(X_A| \mathfrak I)$ and
  $g(A) = H(X_A) + H(X_A| \mathfrak P)$.}


\textbf{Machine Translation}: Another application in machine
translation is to choose a subset of training data that is optimized
for given test data set, a problem previously addressed with 
modular functions \cite{moore2010intelligent}. Defining a submodular
function with ground set over the union of training and test sample
inputs $V = V_{\text{tr}} \cup V_{\text{te}}$, we can set $f:
2^{V_\text{tr}} \to \mathbb R_+$ to $f(X) = f(X | V_{\text{te}})$, and
take $g(X) = |X|$, and $b \approx 0$ in Problem 2 to address this
problem.  We call this the {\em Submodular Span} problem.


\arxiv{\textbf{Probabilistic Inference:} Many problems in
computer vision and graphical model inference involve finding an
assignment to a set of random variables. The most-probable explanation
(MPE) problem finds the assignment that maximizes the probability. In
computer vision and high-tree-width Markov random fields, \arxivalt{this has
been addressed using graph-cut algorithms 
~\cite{boykovJolly01}, which are
applicable when the MRF's energy function is submodular and limited
degree, and more general submodular function minimization in the case
of higher degree.}{the energy functions are often submodular~\cite{boykovJolly01}.} Moreover, many useful non-submodular energy functions have
also recently been phrased as forms of constrained submodular minimization~\cite{jegelka2011-nonsubmod-vision} which still have bounded
approximation guarantees. Some of these non-submodular energy functions can be modeled through Problems 1 and 2\arxivalt{,
where $f$ is still the submodular energy function to be minimized while
$g$ represents a submodular constraint.}{.} For example, in image co-segmentation
\cite{rother2006cosegmentation}, $V=V_1 \cup V_2$ can represent the
set of pixels in two images, $f$ is the submodular energy function of
the two images, while $g$ represents the similarity between the two
histograms of the hypothesized foreground regions in the two images, a
function shown to be submodular in
\cite{rother2006cosegmentation}\footnote{
\cite{rother2006cosegmentation} showed that $-g$ is supermodular.}. \looseness-1}

Apart from the real-world applications above, both Problems 1 and 2
generalize a number of well-studied discrete optimization
problems. For example the \emph{Submodular Set Cover} problem
(henceforth SSC)~\cite{wolsey1982analysis} occurs as a special case of Problem 1,
with $f$ being modular and $g$ is submodular. Similarly the
\emph{Submodular Cost Knapsack} problem (henceforth SK)~\cite{sviridenko2004note} is a
special case of problem 2 again when $f$ is modular and $g$
submodular.  Both these problems subsume the \emph{Set
  Cover} and \emph{Max k-Cover}
problems~\cite{feige1998threshold}. When both $f$ and $g$
are modular, Problems 1 and 2 are called \emph{knapsack
  problems}~\cite{kellerer2004knapsack}.\arxiv{Furthermore, Problem 1 also subsumes the cardinality constrained submodular minimization problem~\cite{svitkina2008submodular} and more generally the problem of minimizing a submodular function subject to a knapsack constraints. It also subsumes the problem of minimizing a submodular function subject to a matroid span constraint (by setting $g$ as the rank function of the matroid, and $c$ as the rank of the matroid). This in turn subsumes the minimum submodular spanning tree problem~\cite{goel2009approximability}, when $f$ is monotone submodular.
}\looseness-1


\arxiv{\subsection{Our Contributions}}

The following are some of our contributions. We show that Problems 1 and 2 are intimately connected,
in that any approximation algorithm for either problem can be used to
provide guarantees for the other problem as well.  We then provide a framework of
combinatorial algorithms based on optimizing, sometimes iteratively,
subproblems that are easy to solve. These subproblems
are obtained by computing either upper or lower bound approximations
of the cost functions or constraining functions. We also show that
many combinatorial algorithms like the greedy algorithm for SK~\cite{sviridenko2004note} and SSC~\cite{wolsey1982analysis} \arxiv{(both of which seemingly use different
techniques) }also belong to this framework and
provide the first 
constant-factor bi-criterion approximation algorithm for
SSC~\cite{wolsey1982analysis} and hence the general set cover
problem~\cite{feige1998threshold}.  We then show how with suitable
choices of approximate functions, we can obtain a number of bounded
approximation guarantees and show the hardness for Problems 1 and 2,
which in fact match some of our approximation guarantees\arxiv{ up to
  $\log$-factors}. Our guarantees and hardness results depend on the
\emph{curvature} of the submodular
functions~\cite{conforti1984submodular}.  We observe a strong
asymmetry in the results that the factors change polynomially based on
the curvature of $f$ but only by a constant-factor with the curvature
of $g$, hence making the SK and SSC much easier compared to SCSK and
SCSC. \arxiv{Finally we empirically evaluate the performance of our
algorithms showing that the typical case behavior is much better than
these worst case bounds.}\looseness-1

\JTR{It's also important here to summarize the result showing how the curvature of $f$ and $g$
has a non-symmetric effect on the hardness of the problem, in that in one case
it only changes a constant, but the other case it changes a function of $n$. 
This is the stuff that is discussed briefly at the end of \S\ref{sec:hardness} but
I think this is important and actually should also be added to the abstract (not yet done).
Could
you please add a bit of text here and in the abstract about this?}\RTJ{Added something above. Pl check.}

\section{Background and Main Ideas}
\label{background}

We first introduce several key concepts used throughout the
paper. \arxivalt{}{This paper includes only the main results and we
  defer all the proofs and additional discussions to the extended
  version~\cite{nipsextendedvsubcons}. }%
Given a submodular function $f$, we define the total curvature,
$\curvf{f}$\arxiv{, and the curvature of $f$ with respect to a set $X$,
$\curvf{f}(X)$,} as\footnote{We can assume, w.l.o.g that $f(j) > 0, g(j) > 0, \forall j \in V$\arxiv{, since if for any $j \in V$, $f(j) = 0$, it follows from submodularity and monotonicity that $f(j | X) = 0, \forall X \subseteq V$. Hence we can effectively remove that element $j$ from the ground set.}}:
\arxivalt{\begin{align}
\curvf{f} = 1 - \min_{j \in V} \frac{f(j | V \backslash j)}{f(j)},\;\;\text{ and }\;\; \curvf{f}(X) = 1 - \min\{\min_{j \in X} \frac{f(j | X \backslash j)}{f(j)}, \min_{j \notin X} \frac{f(j | X)}{f(j)}\}.
\end{align}}{$\curvf{f} = 1 - \min_{j \in V} \frac{f(j | V \backslash j)}{f(j)}$~\cite{conforti1984submodular\arxiv{, vondrak2010submodularity}}.}
\arxiv{The total curvature $\curvf{f}$ is then
$\curvf{f}(V)$~\cite{conforti1984submodular, vondrak2010submodularity}.}
Intuitively, the curvature $0 \leq \curvf{f} \leq 1$ measures the
distance of $f$ from modularity and $\curvf{f} = 0$ if and only if $f$
is modular (or additive, i.e., $f(X) = \sum_{j \in X} f(j)$). \arxiv{Totally normalized \cite{cun82} and
saturated functions like matroid rank have a curvature
$\curvf{f} = 1$. }A number of approximation guarantees in the context of submodular optimization have been refined
via the curvature of the submodular function~\cite{conforti1984submodular, rkiyersemiframework2013, curvaturemin}. \arxiv{For example, when
maximizing a monotone submodular function under cardinality upper
bound constraints, the bound of $1 - e^{-1}$ has been refined to
$\frac{1 - e^{-\curvf{f}}}{\curvf{f}}$
~\cite{conforti1984submodular}. Similar bounds have also been shown in
the context of constrained submodular
minimization~\cite{rkiyersemiframework2013, curvaturemin, iyermirrordescent}.} In this paper, we shall
witness the role of curvature also in determining the
approximations and the hardness of problems 1 and
2. 
\narxiv{\begin{wrapfigure}[9]{r}{0.5\textwidth}
\vspace{-4.1ex}
\begin{minipage}{0.5\textwidth}}
\begin{algorithm}[H]
\caption{General algorithmic framework to address both Problems 1 and 2}
\begin{algorithmic}[1]
\FOR{$t = 1, 2, \cdots, T$}
\STATE Choose surrogate functions $\hat{f_t}$ and $\hat{g_t}$ 
for $f$ and $g$ respectively, tight at $X^{t-1}$.
\STATE Obtain $X^t$ as the optimizer of Problem 1 or 2 with
$\hat{f_t}$ and $\hat{g_t}$ instead of $f$ and $g$.
\ENDFOR
\end{algorithmic}
\label{alg:framework}
\end{algorithm}
\narxiv{\end{minipage}
\end{wrapfigure}}
The main idea of this paper is a framework of algorithms based on
choosing appropriate surrogate functions for $f$ and $g$ to optimize
over. This framework is represented in Algorithm~\ref{alg:framework}. We
would like to choose surrogate functions $\hat{f_t}$ and $\hat{g_t}$
such that using them, Problems 1 and 2 become easier. If the algorithm
is just single stage (not iterative), we represent the surrogates as
$\hat{f}$ and $\hat{g}$. The surrogate functions we consider in this
paper are in the forms of bounds (upper or lower) and
approximations. \arxiv{Our algorithms using upper and lower bounds are
  analogous to the majorization/ minimization algorithms proposed
  in~\cite{rkiyersemiframework2013, iyermirrordescent}, where we provided a unified
  framework of fast algorithms for submodular optimization. We show there in
  that this framework subsumes a large class of known combinatorial
  algorithms and also providing a generic recipe for different forms
  of submodular function optimization. We extend these ideas to the
  more general context of problems 1 and 2, to obtain a fast family of
  algorithms. The other type of surrogate functions we consider are
  those obtained from other approximations of the functions. One such
  classical approximation is the ellipsoidal
  approximations~\cite{goemans2009approximating}. While computing
  this approximation is time consuming, it turns out to provide the
  tightest theoretical guarantees.}
  
\textbf{Modular lower bounds: }
Akin to convex functions, submodular functions have tight modular lower bounds. These bounds are related to the subdifferential $\partial_f(Y)$ of the submodular set function $f$ at a set $Y \subseteq V$\arxivalt{, which is defined 
\cite{fujishige2005submodular}
as:
\begin{align}
\partial_f(Y) = \{y \in \mathbb{R}^n: f(X) - y(X) \geq f(Y) - y(Y)\;\text{for all } X \subseteq V\}
\end{align}
For a vector $x \in \mathbb{R}^V$ and $X \subseteq V$, we write $x(X)
= \sum_{j \in X} x(j)$.}{~\cite{fujishige2005submodular}.}  Denote a
subgradient at $Y$ by $h_Y \in \partial_f(Y)$. The extreme points of
$\partial_f(Y)$ may be computed via a greedy algorithm: Let $\perm$
be a permutation of $V$ that assigns the elements in $Y$ to the first
$|Y|$ positions ($\perm(i) \in Y$ if and only if $i \leq  |Y|$). Each
such permutation defines a chain with elements $S_0^\perm =
\emptyset$, $S^{\perm}_i = \{ \perm(1), \perm(2), \dots, \perm(i)
\}$ and $S^{\perm}_{|Y|} = Y$. This chain defines an extreme point
$h^{\perm}_Y$ of $\partial_f(Y)$ with entries \arxivalt{\begin{align}
    h^{\perm}_Y(\perm(i)) = f(S^{\perm}_i) - f(S^{\perm}_{i-1}).
  \end{align}}{$h^{\perm}_Y(\perm(i)) = f(S^{\perm}_i) -
  f(S^{\perm}_{i-1})$.}  Defined as above, $h^{\perm}_Y$
forms a lower bound of $f$, tight at $Y$ --- i.e.,
$h^{\perm}_Y(X) = \sum_{j \in X} h^{\perm}_Y(j) \leq f(X), \forall X
\subseteq V$ and $h^{\perm}_Y(Y) = f(Y)$.\looseness-1

\textbf{Modular upper bounds:}
We can also define superdifferentials $\partial^f(Y)$ of a submodular
function 
\cite{jegelka2011-nonsubmod-vision,rkiyersubmodBregman2012}
at
$Y$\arxivalt{: 
\begin{align}\label{supdiff-def}
\partial^f(Y) = \{y \in \mathbb{R}^n: f(X) - y(X) \leq f(Y) - y(Y);\text{for all } X \subseteq V\}
\end{align}}{.}
It is possible, moreover, to provide specific supergradients~\cite{rkiyersubmodBregman2012, rkiyersemiframework2013, iyermirrordescent} that define the following two modular upper bounds (when referring either one, we use $m^f_X$):
\notarxiv{\small}
\begin{align*}
m^f_{X, 1}(Y) \triangleq f(X) - \!\!\!\! \sum_{j \in X \backslash Y } f(j| X \backslash j) + \!\!\!\! \sum_{j \in Y \backslash X} f(j| \emptyset)\scalebox{1.3}{,}\;\;\;\arxiv{\\}
m^f_{X, 2}(Y) \triangleq f(X) - \!\!\! \sum_{j \in X \backslash Y } f(j| V \backslash j) + \!\!\!\! \sum_{j \in Y \backslash X} f(j| X). \nonumber
\end{align*}
\normalsize
Then $m^f_{X, 1}(Y) \geq f(Y)$ and $m^f_{X, 2}(Y) \geq f(Y), \forall Y \subseteq V$ and $m^f_{X, 1}(X) = m^f_{X, 2}(X) = f(X)$.


\textbf{MM algorithms using upper/lower bounds: } Using the modular upper
and lower bounds above in Algorithm~\ref{alg:framework}, provide a class of Majorization-Minimization
(MM) algorithms, akin to the algorithms proposed
in~\cite{rkiyersemiframework2013, iyermirrordescent} for submodular optimization and
in~\cite{narasimhanbilmes,rkiyeruai2012} for DS optimization (Problem
0 above).  An appropriate choice of the bounds ensures that the
algorithm always improves the objective values for Problems 1 and
2. In particular, choosing $\hat{f_t}$ as a modular upper bound of $f$
tight at $X^t$, or $\hat{g_t}$ as a modular lower bound of $g$ tight
at $X^t$, or both, ensures that the objective value of Problems 1 and
2 always improves at every iteration as long as the corresponding
surrogate problem can be solved exactly. Unfortunately, Problems 1 and
2 are NP-hard even if $f$ or $g$ (or both) are modular~\cite{feige1998threshold}, \JTR{important:
  should cite where this is shown}\RTJ{done} and therefore the surrogate
problems themselves cannot be solved exactly. Fortunately, the
surrogate problems are often much easier than the original ones and
can admit $\log$ or constant-factor guarantees.  In practice,
moreover, these factors are almost $1$. 
\arxivalt{In order to guarantee improvement from a theoretical stand-point however, the iterative schemes can be slightly modified using the following trick.  Notice that the only case when the true valuation of
$X^t$ is better than $X^{t+1}$ is when the surrogate valuation of
$X^t$ is better than $X^{t+1}$ (since $X^{t+1}$ is only near-optimal
and not optimal). In such case, we terminate the algorithm at $X^t$.}{Furthermore, with a simple modification of the iterative procedure of Algorithm~\ref{alg:framework}, we can guarantee improvement at every iteration~\cite{nipsextendedvsubcons}.}
What is also fortunate and perhaps surprising, as we show in this
paper below, is that unlike the case of DS optimization (where the problem
is inapproximable in general \cite{rkiyeruai2012}), the constrained
forms of optimization (Problems 1 and 2) do have approximation
guarantees. 
\JTR{I reworded this but double check.}\RTJ{I think that’s right}



\textbf{Ellipsoidal Approximation: }
We also consider ellipsoidal approximations (EA) of $f$. The main
result of Goemans et.\  al~\cite{goemans2009approximating} is to
provide an algorithm based on approximating the submodular polyhedron
by an ellipsoid. They show that for any polymatroid function $f$,
one can compute an approximation of the form $\sqrt{w^f(X)}$ for a
certain modular weight vector $w^f \in \mathbb R^V$, such that $\sqrt{w^f(X)} \leq f(X) \leq
O(\sqrt{n}\log{n}) \sqrt{w^f(X)}, \forall X \subseteq V$.
A simple trick then provides a curvature-dependent approximation~\cite{curvaturemin} ---
we define the $\curvf{f}$-\emph{\truncated{}} version of $f$ as
follows: \arxivalt{\begin{align}
\label{eq:defineg}
f^{\kappa}(X) \triangleq 
\frac{f(X) - {(1 - \curvf{f})} \sum_{j \in X} f(j)}
{\curvf{f}}.
\end{align}}{$f^{\kappa}(X) \triangleq \bigl[f(X) - {(1 - \curvf{f})} \sum_{j \in X} f(j)\bigr]/ \curvf{f}$. Then, the submodular function $f^{\text{ea}}(X) = \curvf{f} \sqrt{w^{f^{\kappa}}(X)} + (1 -
  \curvf{f})\sum_{j \in X} f(j)$ satisfies~\cite{curvaturemin}:
\begin{align}\label{maineq}
f^{\text{ea}}(X) 
\leq f(X) 
\leq O\left(\frac{\sqrt{n} \log{n}}{1 + (\sqrt{n} \log{n} - 1)(1 - \curvf{f})}\right) f^{\text{ea}}(X), \forall X \subseteq V
\end{align}}
\arxiv{The function $f^{\kappa}$ essentially contains the zero curvature component of the submodular function $f$ and the modular upper bound $\sum_{j \in X} f(j)$ contains all the linearity. The main idea is to then approximate only the polymatroidal part and retain the linear component. This simple trick improves many of the approximation bounds.
\JTR{Add something more here about the properties of $f^\kappa$, and why
it is called curve-normalized.}\RTJ{done}
We then have the following lemma.

\begin{lemma} \cite{curvaturemin}\label{cor:learn}
  Given a polymatroid function $f$ with a curvature $\curvf{f} < 1$,
  the submodular function $f^{\text{ea}}(X) = \curvf{f} \sqrt{w^{f^{\kappa}}(X)} + (1 -
  \curvf{f})\sum_{j \in X} f(j)$ satisfies:
\begin{align}\label{maineq}
f^{\text{ea}}(X) 
\leq f(X) 
\leq O\left(\frac{\sqrt{n} \log{n}}{1 + (\sqrt{n} \log{n} - 1)(1 - \curvf{f})}\right) f^{\text{ea}}(X), \forall X \subseteq V
\end{align}
\end{lemma}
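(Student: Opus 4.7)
The plan is to first verify that $f^{\kappa}$ is itself a polymatroid function, then invoke the Goemans et al.~\cite{goemans2009approximating} ellipsoidal approximation on $f^{\kappa}$, and finally to combine the resulting two-sided sandwich with the subadditive bound $f^{\kappa}(X) \leq \sum_{j \in X} f(j)$ in order to obtain the curvature-dependent factor in the statement rather than a naive linear-in-$\curvf{f}$ factor.

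First, I would show that $f^{\kappa}$ is monotone, normalized, and submodular. Normalization is immediate. For monotonicity, direct computation gives $f^{\kappa}(j\mid X) = [f(j\mid X) - (1-\curvf{f}) f(j)]/\curvf{f}$; the definition of total curvature gives $f(j\mid V\setminus j) \geq (1-\curvf{f})f(j)$, and submodularity of $f$ gives $f(j\mid X) \geq f(j\mid V\setminus j)$, so $f^{\kappa}(j\mid X)\geq 0$. Submodularity follows because $f^{\kappa}(j\mid X)$ differs from $f(j\mid X)/\curvf{f}$ by a constant independent of $X$, and $f(j\mid X)$ is non-increasing in $X$. A one-line computation also yields $f^{\kappa}(j) = f(j)$ on singletons, so subadditivity of the polymatroid $f^{\kappa}$ (which follows from submodularity plus $f^{\kappa}(\emptyset)=0$) gives the crucial inequality $f^{\kappa}(X) \leq \sum_{j\in X} f(j)$. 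Write $R(X) = \sum_{j\in X} f(j)$ for brevity.

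Next, applying Goemans et al.\ to the polymatroid $f^{\kappa}$ produces a modular vector $w^{f^{\kappa}}$ satisfying $\sqrt{w^{f^{\kappa}}(X)} \leq f^{\kappa}(X) \leq \beta\sqrt{w^{f^{\kappa}}(X)}$ for $\beta = O(\sqrt{n}\log n)$. The definition of $f^{\kappa}$ rearranges to the exact decomposition $f(X) = \curvf{f} f^{\kappa}(X) + (1-\curvf{f}) R(X)$, while by construction $f^{\text{ea}}(X) = \curvf{f}\sqrt{w^{f^{\kappa}}(X)} + (1-\curvf{f})R(X)$. The lower bound $f^{\text{ea}}(X) \leq f(X)$ is then immediate from $\sqrt{w^{f^{\kappa}}(X)} \leq f^{\kappa}(X)$.

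For the upper bound, substitute $\sqrt{w^{f^{\kappa}}(X)} \geq f^{\kappa}(X)/\beta$ into $f^{\text{ea}}(X)$ and set $p = \curvf{f} f^{\kappa}(X)$, $q = (1-\curvf{f}) R(X)$. Then $f(X) = p + q$ while $f^{\text{ea}}(X) \geq p/\beta + q$, so
\[
\frac{f(X)}{f^{\text{ea}}(X)} \;\leq\; \frac{p+q}{p/\beta + q}.
\]
The right-hand side is increasing in $p$ for $\beta \geq 1$, and the subadditivity bound $f^{\kappa}(X) \leq R(X)$ translates to $p \leq \frac{\curvf{f}}{1-\curvf{f}}\,q$. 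Plugging this maximal value of $p$ into the ratio and simplifying (using $\curvf{f} + \beta(1-\curvf{f}) = 1 + (\beta - 1)(1-\curvf{f})$) yields exactly $\beta/[1 + (\beta - 1)(1-\curvf{f})]$, the stated factor. The one delicate point is this last step: bounding $f^{\kappa}(X)$ only via $\beta\sqrt{w^{f^{\kappa}}(X)}$ would give the much weaker factor $1 + (\beta-1)\curvf{f}$, linear rather than concave in $\curvf{f}$; it is the extra a-priori cap $f^{\kappa}(X) \leq R(X)$ coming from $f^{\kappa}(j) = f(j)$ and subadditivity that drives the improvement.
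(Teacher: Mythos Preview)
Your proof is correct and complete. The paper itself does not prove this lemma; it simply imports it from~\cite{curvaturemin}, offering only the one-line intuition that ``$f^{\kappa}$ essentially contains the zero curvature component \ldots\ The main idea is to then approximate only the polymatroidal part and retain the linear component.'' Your argument is precisely the fleshed-out version of that sentence: verify $f^{\kappa}$ is a polymatroid, apply Goemans et al.\ to it, and then exploit the decomposition $f = \curvf{f} f^{\kappa} + (1-\curvf{f})R$. The one subtlety you correctly flag --- that the sharp ratio requires the subadditivity cap $f^{\kappa}(X)\le R(X)$ (coming from $f^{\kappa}(j)=f(j)$) rather than merely the Goemans bound on $f^{\kappa}$ --- is exactly what turns a factor linear in $\curvf{f}$ into the stated one, and is the real content of the lemma.
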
} 
$f^{\text{ea}}$ is multiplicatively bounded by $f$ by a
factor depending on $\sqrt{n}$ and the curvature. \arxiv{The
  dependence on the curvature is evident from the fact that when
  $\curvf{f} = 0$, we get a bound of $O(1)$, which is not surprising
  since a modular $f$ is exactly represented as $f(X) = \sum_{j \in
    X} f(j)$. }We shall use the result above in providing
approximation bounds for Problems 1 and 2. In particular, the
surrogate functions $\hat{f}$ or $\hat{g}$ in
Algorithm~\ref{alg:framework}
can be the ellipsoidal
approximations above, and the multiplicative bounds transform into
approximation guarantees for these problems.

\section{Relation between SCSC and SCSK}
In this section, we show a precise relationship between Problems 1 and
2. From the formulation of Problems 1 and 2, it
is clear that these problems are duals of each other. Indeed, in
this section we show that
the problems are polynomially transformable into each other.

\begin{minipage}{\textwidth}
\notarxiv{\vspace{-1.1ex}}
  \centering
  \begin{minipage}{.42\textwidth}
    \centering
    \begin{algorithm}[H]
    \caption{Approx.\ algorithm for SCSK using an approximation algorithm for SCSC\arxiv{ using Linear search}.} 
    \begin{algorithmic}[1]
      \STATE \textbf{Input:} An SCSK instance with budget $b$,
      an $[\sigma, \rho]$ approx.\ algo.\ for SCSC, \& $\epsilon > 0$.
    \STATE \textbf{Output: } $[(1 - \epsilon) \rho, \sigma]$ approx. for SCSK.
\STATE $c \leftarrow g(V), \hat{X_c} \leftarrow V$.
\WHILE{$f(\hat{X_c}) > \sigma b$}
\STATE $c \leftarrow (1 - \epsilon) c$
\STATE $\hat{X_c} \leftarrow [\sigma, \rho]$ approx. for SCSC using $c$.
    \ENDWHILE
    \arxiv{\STATE Return $\hat{X}_c$}
    \end{algorithmic}
    \label{alg:alg1}
    \end{algorithm}
  \end{minipage}
~~~
  \begin{minipage}{.42\textwidth}
    \centering
    \begin{algorithm}[H]
     \caption{Approx.\ algorithm for SCSC using an approximation algorithm for SCSK\arxiv{ using Linear search}.} 
    \begin{algorithmic}[1]
      \STATE \textbf{Input:} An SCSC instance with cover $c$, an
      $[\rho, \sigma]$ approx.\ algo.\ for SCSK, \& $\epsilon > 0$.
    \STATE \textbf{Output: } $[ (1 + \epsilon)\sigma, \rho]$ approx. for SCSC.
\STATE $b \leftarrow \argmin_j f(j), \hat{X_b} \leftarrow \emptyset$.
\WHILE{$g(\hat{X_b}) < \rho c$}
\STATE $b \leftarrow (1 + \epsilon) b$
\STATE $\hat{X_b} \leftarrow [\rho, \sigma]$ approx. for SCSK using $b$.
    \ENDWHILE
    \arxiv{\STATE Return $\hat{X}_b$.}
    \end{algorithmic}
    \label{alg:alg2}
    \end{algorithm}
  \end{minipage}
  \arxiv{\captionof{figure}{Bicriterion transformation algorithms for SCSC and SCSK using Linear search.}}
  \label{fig:twoalg}
\end{minipage}
\normalsize
\arxiv{\vspace{1.1ex}}

We first introduce the notion of
bicriteria algorithms. An algorithm is a $[\sigma, \rho]$ bi-criterion
algorithm for Problem 1 if it is guaranteed to obtain a set $X$ such
that $f(X) \leq \sigma f(X^*)$ (approximate optimality) and $g(X) \geq
c^{\prime} = \rho c$ (approximate feasibility), where $X^*$ is an optimizer of
Problem 1. Similarly, an algorithm is a $[\rho,\sigma]$ bi-criterion algorithm for Problem 2 if it is guaranteed to
obtain a set $X$ such that $g(X) \geq \rho g(X^*)$ and $f(X) \leq b^{\prime} = 
\sigma b$, where $X^*$ is the optimizer of Problem 2. In a
bi-criterion algorithm for Problems 1 and 2, typically $\sigma \geq 1$
and $\rho \leq 1$. \arxiv{We call these type of approximation algorithms, bi-criterion approximation algorithms of type 1. 

We can also view the bi-criterion approximations from another angle. We can say that for Problem 1, $\hat{X}$ is a feasible solution (i.e., it satisfies the constraint), and is a $[\sigma, \rho]$ bi-criterion approximation, if $f(\hat{X}) \leq \sigma f(\hat{X}^*)$, where $\hat{X}^*$ is the optimal solution to the problem $\min\{f(X) | g(X) \geq c/\rho\}$. Similarly for Problem 2, we can say that $\hat{X}$ is a feasible solution (i.e., it satisfies the constraint), and is a $[\rho, \sigma]$ bi-criterion approximation, if $g(\hat{X}) \geq \rho g(\hat{X}^*)$, where $\hat{X}^*$ is the optimal solution to the problem $\max\{g(X) | f(X) \leq b \sigma\}$. We call these the bi-criterion approximation algorithms of type 2. 

It is easy to see that these algorithms can easily be transformed into each other. For example, for problem 1, a bi-criterion algorithm of type-I can obtain a guarantee of type-II if we run it till a covering constraint of $c^{\prime}/\rho$ (where $c^{\prime}$ in this case, is the approximate covering constraint which the type-I algorithm needs to satisfy -- note that it need not be the actual covering constraint of the problem). Similarly an algorithm of type-II can obtain a guarantee of type-I if run till a covering constraint of $\rho c$ (in this case, $c$ is the actual 'covering' constraint, since a type-II approximate algorithm provides a feasible set). We can similarly transform these guarantees for problem 2. In particular, a bi-criterion algorithm of type-I can be used to obtain a guarantee of type-II if we run it till a budget of $b^{\prime} \sigma$ (again, here $b^{\prime}$ is the approximate budget of the type-I algorithm). Similarly an algorithm of type-II can obtain a guarantee of type-I if run till a covering constraint of $b/\sigma$ (in this case, $b$ is the budget of the original problem).

Though both type-I and type-II guarantees are easily transformable into each other, through the rest of this paper whenever we refer to bi-criterion approximations, we shall consider only the type-I approximations.
}A {\em non-bicriterion} algorithm for Problem 1 is
when $\rho = 1$ and a {\em non-bicriterion} algorithm for Problem 2 is when
$\sigma = 1$.
Algorithms~\ref{alg:alg1} and \ref{alg:alg2} provide the schematics
for using an approximation algorithm for one of the problems
for solving the other.\narxiv{\looseness-1}\arxiv{ 

The main idea of
Algorithm~\ref{alg:alg1} is to start with the ground set $V$ and
reduce the value of $c$ (which governs SCSC), until the valuation of
$f$ just falls below $\sigma b$. At that point, we are guaranteed to
get a $((1 - \epsilon) \rho, \sigma)$ solution for SCSK. Similarly in
Algorithm~\ref{alg:alg2}, we increase the value of $b$ starting at the
empty set, until the valuation at $g$ falls above $\rho c$. At this
point we are guaranteed a $(( 1 + \epsilon) \sigma, \rho)$ solution for SCSC. In order to avoid degeneracies, we assume that $f(V) \geq b \geq \min_j f(j)$ and $g(V) \geq c \geq \min_j g(j)$, else the solution to the problem is trivial.}
\begin{theorem}
\label{thm1}
  Algorithm~\ref{alg:alg1} is guaranteed to find a set $\hat{X_c}$
  which is a $[(1 - \epsilon) \rho, \sigma]$ approximation of SCSK in at most 
  $\log_{1/(1 - \epsilon)} [g(V)/\min_j g(j)]$ calls to the $[\sigma, \rho]$ approximate
  algorithm for SCSC. Similarly, Algorithm~\ref{alg:alg2} is
  guaranteed to find a set $\hat{X_b}$ which is a $[(1 + \epsilon)
  \sigma, \rho]$ approximation of SCSC in $\log_{1 + \epsilon} [f(V)/\min_j f(j)]$ calls
  to a $[\rho, \sigma]$ approximate algorithm for SCSK.
\end{theorem}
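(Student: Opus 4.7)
The plan is to handle each algorithm by (i) reading approximate feasibility straight off the loop's exit condition, (ii) establishing approximate optimality by contradicting the loop-continuation condition at the penultimate iteration, and (iii) bounding the number of sub-routine calls by the geometric schedule on the threshold parameter.

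For Algorithm~\ref{alg:alg1}, the exit test immediately yields $f(\hat{X}_c) \le \sigma b$, giving the $\sigma$ side of the bicriterion guarantee for SCSK. For the coverage side, let $c$ be the parameter at termination and $c' = c/(1-\epsilon)$ its value at the previous iteration, where by definition $f(\hat{X}_{c'}) > \sigma b$. Let $X^*$ be an SCSK optimum, so $f(X^*) \le b$. Suppose for contradiction that $g(\hat{X}_c) < (1-\epsilon)\rho\, g(X^*)$. Approximate feasibility of the SCSC subroutine gives $g(\hat{X}_c) \ge \rho c$, which combined with the previous inequality forces $c' = c/(1-\epsilon) < g(X^*)$. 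Hence $X^*$ is itself feasible for the SCSC subproblem at cover $c'$, so its optimum value is at most $f(X^*) \le b$. The $[\sigma,\rho]$ guarantee applied at iteration $c'$ then yields $f(\hat{X}_{c'}) \le \sigma \cdot f(X^*) \le \sigma b$, contradicting the loop-continuation condition at that iteration. Thus $g(\hat{X}_c) \ge (1-\epsilon)\rho\, g(X^*)$, as required.

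For the iteration bound, after $k$ iterations $c = g(V)(1-\epsilon)^k$, so within $\log_{1/(1-\epsilon)}[g(V)/\min_j g(j)]$ iterations we reach $c \le \min_j g(j)$. At that point every singleton is feasible for the SCSC subproblem, and under the mild non-degeneracy assumption $b \ge \min_j f(j)$ the subroutine returns a set with $f$-value at most $\sigma \min_j f(j) \le \sigma b$, forcing termination.

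Algorithm~\ref{alg:alg2} is handled symmetrically. The exit test gives $g(\hat{X}_b) \ge \rho c$, which is the $\rho$ side of the bicriterion for SCSC. For optimality, let $b$ be the budget at termination, $b' = b/(1+\epsilon)$ its prior value (with $g(\hat{X}_{b'}) < \rho c$), and $X^*$ an SCSC optimum (so $g(X^*) \ge c$). Suppose $f(\hat{X}_b) > (1+\epsilon)\sigma\, f(X^*)$. Since the SCSK subroutine guarantees $f(\hat{X}_b) \le \sigma b$, this gives $b > (1+\epsilon) f(X^*)$, i.e.\ $b' > f(X^*)$, making $X^*$ feasible for SCSK at budget $b'$. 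The $[\rho,\sigma]$ guarantee then yields $g(\hat{X}_{b'}) \ge \rho\, g(X^*) \ge \rho c$, contradicting the loop-continuation condition at the prior iteration. The iteration count is bounded identically: $b$ grows by a factor $(1+\epsilon)$ per round, so within $\log_{1+\epsilon}[f(V)/\min_j f(j)]$ rounds it reaches $f(V)$, at which point the ground set itself is feasible and forces termination.

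The proof is largely bookkeeping; the only subtlety — rather than a real obstacle — is justifying termination within the claimed iteration counts, which relies on the mild non-degeneracy assumptions $\min_j f(j) \le b \le f(V)$ and $\min_j g(j) \le c \le g(V)$. Without these either the original problem has a trivial answer ($\emptyset$ or $V$) or is infeasible, and the analysis can be restricted to the meaningful regime where the geometric search over thresholds is well defined.
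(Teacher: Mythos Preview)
Your proof is correct and follows essentially the same approach as the paper's: read off approximate feasibility from the exit condition, use the penultimate iteration to bound $g(X^*)$ (respectively $f(X^*)$) against the threshold parameter, and bound the number of calls via the geometric schedule under the same non-degeneracy assumptions $\min_j f(j) \le b \le f(V)$ and $\min_j g(j) \le c \le g(V)$. The only cosmetic difference is that you package the optimality step as a contradiction, whereas the paper argues directly that $g(X^*) < c'$ and then chains $g(\hat X_c) \ge \rho c = \rho(1-\epsilon)c' > \rho(1-\epsilon)g(X^*)$; the logical content is identical.
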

\arxiv{
\begin{proof}
  We start by proving the first part, for Algorithm~\ref{alg:alg1}. Notice that
  Algorithm~\ref{alg:alg1} converges when $f(\hat{X_c})$ just falls
  below $\sigma b$. Hence $f(\hat{X_c}) \leq \sigma b$ (is approximately feasible) and at
  the previous step $c^{\prime} = c/(1 - \epsilon)$, we have that
  $f(\hat{X_{c^{\prime}}}) > \sigma b$. Denoting
  $X^*_{c^{\prime}}$ as the optimal solution for SCSC at $c^{\prime}$,
  we have that $f(X^*_{c^{\prime}}) > b$ (a fact which follows from
  the observation that $\hat{X_c}$ is a $[\sigma, \rho]$ approximation
  of SCSC at $c$). Hence if $X^*$ is the optimal solution of SCSK, it
  follows that $g(X^*) < c^{\prime}$. The reason for this is that,
  suppose, $g(X^*) \geq c^{\prime}$. Then it follows that $X^*$ is a
  feasible solution for SCSC at $c^{\prime}$ and hence $f(X^*) \geq
  f(X^*_{c^{\prime}}) > b$. This contradicts the fact that $X^*$ is an
  optimal solution for SCSK (since it is then not even feasible).

  Next, notice that $\hat{X_c}$ satisfies that $g(\hat{X_c}) \geq \rho
  c$, using the fact that $\hat{X_c}$ is obtained from a $(\sigma,
  \rho)$ bi-criterion algorithm for SCSC. Hence,
\begin{align}
g(\hat{X_c}) 
\geq \rho c 
= \rho (1 - \epsilon) c^{\prime} 
> \rho (1 - \epsilon) g(X^*) 
\end{align}
Hence the Algorithm~\ref{alg:alg1} is a $((1 - \epsilon) \rho,
\sigma)$ approximation for SCSK. 
In order to show the converge rate, notice that $c \geq \min_j g(j) > 0$. Since $\sigma \geq 1$ and $b \geq \min_j f(j)$, we can guarantee that this algorithm will stop before $c$ reaches $\min_j g(j)$. The reason is that, when $c = \min_j g(j)$, the minimum value of $f(X)$ such that $g(X) \geq c$ is $\min_j f(j)$, which is smaller than $b$. Moreover, since $\sigma > 1$, it implies that the algorithm would have terminated before this point.


%
%
The proof for the second part of the statement, for
Algorithm~\ref{alg:alg2}, is omitted since it is shown using a
symmetric argument.  
\end{proof}
}%

\arxiv{\noindent\begin{minipage}{\textwidth}
  \centering
  \begin{minipage}{.42\textwidth}
    \centering
    \begin{algorithm}[H]
    \caption{Approx.\ algorithm for SCSK using an approximation algorithm for SCSC\arxiv{ using Binary search}.} 
    \begin{algorithmic}[1]
      \STATE \textbf{Input:} An SCSK instance with budget $b$,
      an $[\sigma, \rho]$ approx.\ algo.\ for SCSC, \& $\epsilon > 0$.
    \STATE \textbf{Output: } $[(1 - \epsilon) \rho, \sigma]$ approx. for SCSK.
    \STATE $c_{\min} \leftarrow \min_j g(j), c_{\max} \leftarrow g(V)$
\WHILE{$c_{\max} - c_{\min} \geq \epsilon c_{\max}$ }
\STATE $c \leftarrow [c_{\max} + c_{\min}]/2$
\STATE $\hat{X_c} \leftarrow [\sigma, \rho]$ approx. for SCSC using $c$.
\IF{$f(\hat{X_c}) > \sigma b$}
\STATE $c_{\max} \leftarrow c$
\ELSE
\STATE $c_{\min} \leftarrow c$
\ENDIF
    \ENDWHILE
\STATE Return $\hat{X}_{c_{\min}}$
    \end{algorithmic}
    \arxivalt{\label{alg:alg3}}{\label{alg:alg1}}
    \end{algorithm}
  \end{minipage}
~~~
  \begin{minipage}{.42\textwidth}
    \centering
    \begin{algorithm}[H]
     \caption{Approx.\ algorithm for SCSC using an approximation algorithm for SCSK\arxiv{ using Binary search}.} 
    \begin{algorithmic}[1]
      \STATE \textbf{Input:} An SCSC instance with cover $c$, an
      $[\rho, \sigma]$ approx.\ algo.\ for SCSK, \& $\epsilon > 0$.
    \STATE \textbf{Output: } $[ (1 + \epsilon)\sigma, \rho]$ approx. for SCSC.
\STATE $b_{\min} \leftarrow \argmin_j f(j), b_{\max} \leftarrow f(V)$.
\WHILE{$b_{\max} - b_{\min} \geq \epsilon b_{\min}$ }
\STATE $b \leftarrow [b_{\max} + b_{\min}]/2$
\STATE $\hat{X_b} \leftarrow [\rho, \sigma]$ approx. for SCSK using $b$
\IF{$g(\hat{X_b}) < \rho c$}
\STATE $b_{\min} \leftarrow b$
\ELSE
\STATE $b_{\max} \leftarrow b$
\ENDIF
    \ENDWHILE
    \STATE Return $\hat{X}_{b_{\max}}$
    \end{algorithmic}
    \arxivalt{\label{alg:alg4}}{\label{alg:alg2}}
    \end{algorithm}
  \end{minipage}
  \arxiv{\captionof{figure}{Bicriterion transformation algorithms for SCSC and SCSK using Binary search.}}
  \label{fig:twoalg}
\end{minipage}}
\arxiv{\vspace{1.1ex}}

Theorem~\ref{thm1} implies that the complexity of Problems 1 and 2 are
identical, and a solution to one of them provides a solution to the
other. Furthermore, as expected, the hardness of Problems 1 and 2 are
also almost identical. When $f$ and $g$ are polymatroid functions,
moreover, we can provide bounded approximation guarantees for both
problems, as shown in the next section. \narxiv{Alternatively we can also do a binary search instead of a linear search to transform Problems 1 and 2. This essentially turns the factor of $O(1/\epsilon)$ into $O(\log 1/\epsilon)$. Due to lack of space, we defer this discussion to the extended version~\cite{nipsextendedvsubcons}.\looseness-1}

\arxiv{Alternatively we can also do a binary search instead of a linear search to transform Problems 1 and 2. This essentially turns the factor of $O(1/\epsilon)$ into $O(\log 1/\epsilon)$. Algorithms~\ref{alg:alg3} and \ref{alg:alg4} show the transformation algorithms using binary search. While the binary search also ensures the same performance guarantees, it does so more efficiently.

\begin{theorem}
\label{thm2}
  Algorithm~\ref{alg:alg3} is guaranteed to find a set $\hat{X_c}$
  which is a $[(1 - \epsilon) \rho, \sigma]$ approximation of SCSK in at most 
  $\log_2 \frac{[g(V)/\min_j g(j)]}{\epsilon}$ calls to the $[\sigma, \rho]$ approximate
  algorithm for SCSC. Similarly, Algorithm~\ref{alg:alg4} is
  guaranteed to find a set $\hat{X_b}$ which is a $[(1 + \epsilon)
  \sigma, \rho]$ approximation of SCSC in $\log_2 \frac{[f(V)/\min_j f(j)]}{\epsilon}$ calls
  to a $[\rho, \sigma]$ approximate algorithm for SCSK. When $f$ and $g$ are integral, moreover, Algorithm~\ref{alg:alg3} obtains a $[\rho, \sigma]$ bi-criterion approximate solution for SCSK in $\log_2 g(V)$ iterations, and similarly Algorithm~\ref{alg:alg4} obtains a $[\sigma, \rho]$ bi-criterion approximate solution for SCSC in $\log_2 g(V)$ iterations.
\end{theorem}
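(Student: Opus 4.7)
The plan is to mirror the proof of Theorem~\ref{thm1}, since the approximation arguments transfer essentially verbatim from the linear-search setting; the novelty is the binary-search convergence rate and the integral refinement. First I would establish the loop invariant for Algorithm~\ref{alg:alg3}: after each iteration, $f(\hat{X}_{c_{\max}}) > \sigma b$ while $f(\hat{X}_{c_{\min}}) \leq \sigma b$. These hold at initialization by the non-degeneracy assumption $f(V) > \sigma b \geq \min_j f(j)$ (so $V$ is infeasible and any $f$-minimizing singleton gives a feasible base), and they are preserved at every update since the algorithm moves $c_{\min}$ or $c_{\max}$ to the midpoint $c$ according to whether $f(\hat{X}_{c}) > \sigma b$ or not.

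Given these invariants, on termination $c_{\min} > (1-\epsilon) c_{\max}$. Letting $X^*$ denote the optimal SCSK solution, I would argue $g(X^*) < c_{\max}$: otherwise $X^*$ is feasible for SCSC at $c_{\max}$ with $f(X^*) \leq b$, forcing $f(\hat{X}_{c_{\max}}) \leq \sigma f(X^*_{c_{\max}}) \leq \sigma b$ and contradicting the invariant. Combining with $g(\hat{X}_{c_{\min}}) \geq \rho c_{\min} > (1-\epsilon)\rho c_{\max} > (1-\epsilon)\rho g(X^*)$ and $f(\hat{X}_{c_{\min}}) \leq \sigma b$ establishes the $[(1-\epsilon)\rho, \sigma]$ guarantee, a direct transcription of the argument in Theorem~\ref{thm1}.

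For the convergence rate, the additive width $c_{\max} - c_{\min}$ is exactly halved at each iteration, while $c_{\max} \geq c_{\min} \geq \min_j g(j)$ is preserved throughout (neither endpoint ever crosses the other). Thus after $k$ iterations $(c_{\max} - c_{\min})/c_{\max} \leq (g(V) - \min_j g(j))/(2^k \min_j g(j))$, and this drops below $\epsilon$ once $k \geq \log_2\bigl((g(V)/\min_j g(j))/\epsilon\bigr)$. For integer-valued $f$ and $g$, I would restrict the bisection to integer midpoints and terminate when $c_{\max} = c_{\min} + 1$; integrality then sharpens the strict inequality $g(X^*) < c_{\max}$ to $g(X^*) \leq c_{\min}$, so $g(\hat{X}_{c_{\min}}) \geq \rho c_{\min} \geq \rho g(X^*)$, yielding a non-bicriterion $[\rho, \sigma]$ guarantee in $\lceil \log_2 g(V) \rceil$ rounds. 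The corresponding statements for Algorithm~\ref{alg:alg4} follow from a fully symmetric argument with the roles of $f$ and $g$ (and the senses of the inequalities) swapped.

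The main obstacle I anticipate is not conceptual but bookkeeping: one must verify that $\hat{X}_{c_{\min}}$ is well defined even if $c_{\min}$ is never updated from its initial value (handled by interpreting the SCSC call at that initial $c_{\min}$ as selecting an $f$-minimizing singleton, which trivially satisfies the $[\sigma,\rho]$ guarantee), and one must ensure the integer bisection never stalls before $c_{\max} - c_{\min} = 1$. Once these edge cases are cleanly dispatched, the binary-search step count and the approximation guarantee both follow mechanically.
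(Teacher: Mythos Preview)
Your proposal is correct and follows essentially the same route as the paper: both establish the invariant $f(\hat{X}_{c_{\min}})\leq\sigma b$, $f(\hat{X}_{c_{\max}})>\sigma b$, invoke the argument of Theorem~\ref{thm1} at the terminal pair $(c_{\min},c_{\max})$ where $c_{\max}/c_{\min}\leq 1/(1-\epsilon)$, and bound the number of iterations by halving the interval width down to $\epsilon\,\min_j g(j)$. Your treatment of the edge cases (well-definedness of $\hat{X}_{c_{\min}}$ at initialization, integer bisection) is more explicit than the paper's, which simply asserts the invariant and, in the integral case, informally identifies the midpoint with $c_{\max}$ once $c_{\max}-c_{\min}=1$.
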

\begin{proof}
To show this theorem, we use the result from Theorem~\ref{thm1}. Let $c = c_{\min}$ and $c^{\prime} = c_{\max}$. An important observation is that throughout the algorithm, the values of $c_{\min}$ satisfy $f(\hat{X}_{c_{\min}}) \leq \sigma b$ and $f(\hat{X}_{c_{\max}}) > \sigma b$. Hence, $f(\hat{X_{c}}) \leq \sigma b$ and $f(\hat{X_{c^{\prime}}}) > \sigma b$. Moreover, notice that $c^{\prime}/c = c_{\max}/c_{\min} = 1/(1 - \epsilon)$. Hence using the proof of Theorem~\ref{thm1} the approximation guarantee follows. 

In order to show the complexity, notice that $c_{\max} - c_{\min}$ is decreasing throughout the algorithm. At the beginning, $c_{\max} - c_{\min} \leq g(V)$ and at convergence, $c_{\max} - c_{\min} \geq \epsilon c_{\max}/2 \geq \epsilon \min_j g(j)/2$. The bound at convergence holds since, let $c^{\prime}_{max}$ and $c^{\prime}_{min}$ be the values at the previous step. It holds that $c^{\prime}_{\max} - c^{\prime}_{\min} \geq \epsilon c^{\prime}_{\max}$. Moreover, $c_{max} - c_{min} = (c^{\prime}_{\max} - c^{\prime}_{\min})/2 \geq \epsilon c^{\prime}_{\max}/2 \geq \epsilon c_{\max}$. Hence the number of iterations is bounded by $\log_2 \frac{[2g(V)/\min_j g(j)]}{\epsilon}$. Moreover, when $f$ and $g$ are integral, the analysis is much simpler. In particular, notice that once $c_{max} - c_{min} = 1$, the algorithm will stop at the next iteration (this is because at this point, $c = (c_{max} + c_{min})/2$ is equivalent to $c = c_{max}$). Hence, the number of iterations is bounded by $\log_2 g(V)$, and we can exactly obtain a $[\rho, \sigma]$ bi-criterion approximation algorithm.

The proof for the second part of the statement, for
Algorithm~\ref{alg:alg2}, similarly follows using a symmetric argument.  

\end{proof}
 When $f$ and $g$ are integral, this removes the $\epsilon$ dependence on the factors and could potentially be much faster in practice. We also remark that a specific instance of such a transformation has been used~\cite{krause08robust}, for a specific class of functions $f$ and $g$. We shall show in section~\ref{extentions} that their algorithm is in fact a special case of Algorithm~\ref{alg:alg3} through a specific construction to convert the non-submodular problem into an instance of a submodular one.
 
  Algorithm~\ref{alg:alg1} and Algorithm~\ref{alg:alg2} indeed provide an
  interesting theoretical result connecting the complexity of Problems
  1 and 2. In the next section however, we provide distinct algorithms for each problem when $f$ and $g$ are polymatroid functions --- this turns out to
  be faster than having to resort to the iterative reductions above.}

\section{Approximation Algorithms}

We consider several algorithms for Problems 1 and 2, which can all be
characterized by the framework of
Algorithm~\ref{alg:framework}, using the surrogate functions of
the form of upper/lower bounds or approximations.\looseness-1

\subsection{Approximation Algorithms for SCSC}
\label{sec:appr-algor-scsc}

We first describe our approximation algorithms designed specifically
for SCSC, leaving to \S\ref{sec:appr-algor-scsk} the presentation of
our algorithms slated for SCSK. We first investigate a special case, the submodular set cover (SSC), and then provide two algorithms, one of them (ISSC) is very practical with a weaker theoretical guarantee, and another one (EASSC) which is slow but has the tightest guarantee.

\textbf{Submodular Set Cover (SSC): }
We start by considering a classical special case of SCSC (Problem 1)
where $f$ is already a modular function and $g$ is a submodular
function. This problem occurs naturally in a number of problems
related to active/online learning~\cite{guillory2010interactive} and summarization~\cite{linbudget,lin2011-class-submod-sum}. This
problem was first investigated by Wolsey~\cite{wolsey1982analysis},
wherein he showed that a simple greedy algorithm achieves bounded (in
fact, log-factor) approximation guarantees. We show that this greedy
algorithm can naturally be viewed in the framework of our
Algorithm~\ref{alg:framework} by choosing appropriate surrogate
functions $\hat{f_t}$ and $\hat{g_t}$. The idea is to use the modular
function $f$ as its own surrogate $\hat{f_t}$ and choose the function
$\hat{g_t}$ as a modular lower bound of $g$. Akin to the framework of
algorithms in~\cite{rkiyersemiframework2013}, the crucial factor is
the choice of the lower bound (or subgradient). Define the
\emph{greedy subgradient} \arxiv{(equivalently the \emph{greedy
    permutation})} as:
\begin{align}
\label{greedyssc}
\perm(i) \in 
\argmin \left\{\frac{f(j)}{g(j | S^{\perm}_{i-1})} \ \bigg\vert \ j \notin S^{\perm}_{i-1} , 
         g(S^{\perm}_{i-1} \cup j) < c \right\}.
\end{align}
Once we reach an $i$ where the constraint $g(S^{\perm}_{i-1} \cup j)
< c$ can no longer be satisfied by any $j \notin S^{\perm}_{i-1}$, we
choose the remaining elements for $\perm$ arbitrarily. Let the corresponding
subgradient be referred to as $h^{\perm}$.
\arxiv{Let $N$ be the minimum $i$
  such that $g(S^{\perm}_i) \geq c$ and $\theta_i = \min_{j \notin S^{\perm}_{i-1}} \frac{f(j)}{g(j |
    S^{\perm}_{i-1})}$.} Then we have the following lemma, 
which is an extension of \cite{wolsey1982analysis}\arxivalt{:}{, }
\arxivalt{\begin{lemma} \label{setcover} Choosing the surrogate
    function $\hat{f}$ as $f$ and $\hat{g}$ as $h^{\perm}$ (with
    $\perm$ defined in Eqn.~\eqref{greedyssc}) in
    Algorithm~\ref{alg:framework}, at the end of the first iteration,
    we are guaranteed to obtain a set $X^g$ such that
\begin{align}
\frac{f(X^g)}{f(X^*)} \leq 1 + \log_e \min\{\lambda_1, \lambda_2, \lambda_3\}
\end{align} 
where $\lambda_1 = \max \{\frac{1}{1 - \kappa_g(S^{\perm}_i)} | \
i: c(S^{\perm}_i) < 1\}$, $\lambda_2 = \frac{\theta_N}{\theta_1}$
and $\lambda_3 = \frac{g(V) - g(\emptyset)}{g(V) -
  g(S^{\perm}_{N-1})}$. Furthermore if $g$ is integral,
$\frac{f(X^g)}{f(X^*)} \leq H(\max_j g(j))$, where $H(d) = \sum_{i =
  1}^d \frac{1}{i}$ for a positive integer $d$.
\end{lemma}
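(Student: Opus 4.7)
The plan is to adapt Wolsey's classical analysis of the greedy algorithm for submodular set cover, using the fact that the greedy permutation in~\eqref{greedyssc} is exactly the ``bang-per-buck'' greedy rule (picking $j$ minimizing $f(j)/g(j\mid S^\perm_{i-1})$). First I would set up notation: let $X^g = S^\perm_N$ where $N$ is the first index with $g(S^\perm_N)\geq c$, and let $X^*$ be optimal for SCSC. The central inequality I would derive is, for each $i\leq N$ and each feasible $T$ (in particular $T=X^*$):
\begin{equation*}
f(\perm(i)) \;\leq\; \frac{f(X^*)\, g(\perm(i)\mid S^\perm_{i-1})}{g(X^*\mid S^\perm_{i-1})},
\end{equation*}
which follows because $f$ is modular, the greedy rule ensures $f(\perm(i))/g(\perm(i)\mid S^\perm_{i-1})\leq f(j)/g(j\mid S^\perm_{i-1})$ for all admissible $j$, and submodularity gives $\sum_{j\in X^*\setminus S^\perm_{i-1}} g(j\mid S^\perm_{i-1})\geq g(X^*\mid S^\perm_{i-1})$. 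Summing over $i=1,\dots,N$ produces a telescoping-style bound on $f(X^g)/f(X^*)$.

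The three bounds $\lambda_1,\lambda_2,\lambda_3$ then arise from three different ways of controlling the residual $g(X^*\mid S^\perm_{i-1})$ and the telescoping sum $\sum_i g(\perm(i)\mid S^\perm_{i-1})/[\text{residual}_i]$. For $\lambda_3$, I would use the simple bound $g(X^*\mid S^\perm_{i-1})\geq c - g(S^\perm_{i-1})\geq g(V)-g(S^\perm_{i-1})$ and then invoke the inequality $(\rho_i-\rho_{i+1})/\rho_i\leq \ln(\rho_i/\rho_{i+1})$ applied to $\rho_i=g(V)-g(S^\perm_{i-1})$, plus an additive ``$+1$'' for the final step. For $\lambda_2$, I would instead chain the per-step ratios $\theta_i = f(\perm(i))/g(\perm(i)\mid S^\perm_{i-1})$ directly; since $\theta_i$ is non-decreasing (by the greedy choice and submodularity of $g$), one obtains a bound in terms of $\theta_N/\theta_1$ via a geometric-sum argument. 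For $\lambda_1$, the curvature factor enters via the Conforti--Cornu\'ejols-style inequality $g(j\mid S)\geq (1-\curvf{g}(S))g(j\mid \emptyset)$, which tightens the lower bound on $g(X^*\mid S^\perm_{i-1})$ used in the central inequality.

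The integral case is the simplest: when $g$ is integer-valued, $c-g(S^\perm_{i-1})$ drops by at least one unit per greedy step when possible, and the summation $\sum 1/(\text{residual})$ becomes a partial harmonic sum, yielding $H(\max_j g(j))$ after a careful accounting that each element $j$ can contribute at most $g(j)$ units. The main obstacle will be handling the $\lambda_1$ case cleanly: curvature with respect to a chain of sets $S^\perm_i$ requires care to ensure the bound $g(j\mid S^\perm_{i-1})\geq (1-\kappa_g(S^\perm_{i-1}))g(j\mid \emptyset)$ is valid for every $j\notin S^\perm_{i-1}$, and then to plug this into the telescoping log-integration in a way that gives exactly $\max_i 1/(1-\curvf{g}(S^\perm_i))$ (restricted to the indices with non-zero residual) inside the logarithm. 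Once the central inequality and the three residual bounds are in place, the final statement follows by taking the minimum over the three derivations.
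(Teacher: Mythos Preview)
Your proposal is a direct re-derivation of Wolsey's submodular set cover analysis: you set up the bang-per-buck inequality, telescope it, and extract the three $\lambda$-bounds and the harmonic bound by controlling the residual in three different ways. This is a valid route and would produce the lemma (modulo one slip noted below).

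The paper, however, does not re-prove any of this. Its argument is a two-line reduction: because $h^{\perm}$ is tight on the greedy chain, $h^{\perm}(S^{\perm}_N)=g(S^{\perm}_N)\geq c$, so $S^{\perm}_N$ is feasible for the surrogate problem $\min\{f(X)\mid h^{\perm}(X)\geq c\}$; solving that modular--modular knapsack with the natural greedy is literally the same iteration as Wolsey's greedy for SSC, and the stated bounds are then quoted verbatim from Theorem~1 of~\cite{wolsey1982analysis}. In other words, the paper's ``proof'' is purely an identification step plus a citation, whereas you are reproducing the cited theorem from first principles. Your approach buys self-containment; the paper's buys brevity.

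One concrete slip in your sketch: for the $\lambda_3$ branch you wrote $g(X^*\mid S^{\perm}_{i-1})\geq c-g(S^{\perm}_{i-1})\geq g(V)-g(S^{\perm}_{i-1})$, but since $c\leq g(V)$ the second inequality goes the wrong way. Wolsey's $\lambda_3$ bound actually comes from the reverse direction of the telescoping: one uses $g(X^*\mid S^{\perm}_{i-1})\geq c-g(S^{\perm}_{i-1})$ as the denominator residual, bounds $\sum_{i\leq N-1}(\rho_i-\rho_{i+1})/\rho_i$ by $\ln(\rho_1/\rho_{N-1})$ with $\rho_i=c-g(S^{\perm}_{i-1})$, and then upper-bounds this ratio by $(g(V)-g(\emptyset))/(g(V)-g(S^{\perm}_{N-1}))$ using $c\leq g(V)$ in the numerator and $c-g(S^{\perm}_{N-1})\geq g(V)-g(S^{\perm}_{N-1})$ only when that last residual is handled via the ``$+1$'' term (in Wolsey's analysis the final step is bounded separately). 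If you intend to carry out the direct proof rather than cite, this step needs a bit more care than your sketch suggests.
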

\begin{proof}
  The permutation $\perm$ is chosen based on a greedy ordering
  associated with the submodular set cover problem, and therefore
  $h^{\perm}(S^{\perm}_N) = g(S^{\perm}_N) \geq c$ (where the
  inequality follows from the definition of $N$), and thus
  $S^{\perm}_N$ is a feasible solution in the surrogate problem
  (where $g$ is replaced by $h^\perm$). The resulting knapsack
  problem can be addressed using the greedy algorithm
  \cite{kellerer2004knapsack}, but this exactly corresponds to the
  greedy algorithm of submodular set cover \cite{wolsey1982analysis} and hence the guarantee
  follows from \JTR{I added the theorem number, but double check that
    this is really theorem 1, I think it is} Theorem 1
  in~\cite{wolsey1982analysis}\RTJ{Yup, I think that's right}.
\end{proof}
}
{
and which is a simpler description of the result stated formally in~\cite{nipsextendedvsubcons}.\looseness-1
\begin{lemma}
\label{setcover}
  The greedy algorithm for SSC~\cite{wolsey1982analysis} can be seen as an instance of
  Algorithm~\ref{alg:framework} by choosing the surrogate function
  $\hat{f}$ as $f$ and $\hat{g}$ as $h^{\perm}$ (with $\perm$
  defined in Eqn.~\eqref{greedyssc}).
\end{lemma}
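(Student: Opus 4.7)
The plan is to unpack what Algorithm~\ref{alg:framework} does in a single iteration starting from $X^0 = \emptyset$ with the specified surrogates, and verify that the resulting subproblem reduces element-by-element to Wolsey's classical SSC greedy. First, since we are in the SSC special case, $f$ is already modular, so $\hat{f} = f$ is modular. The surrogate $\hat{g} = h^{\pi}$ is also modular by construction (a subgradient is a vector in $\mathbb{R}^V$, i.e., a modular function). Hence the surrogate problem becomes $\min\{f(X) : h^{\pi}(X) \geq c\}$, a classical minimum-cost covering knapsack with modular cost and modular coverage.

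Second, I would exploit the chain-tightness property of the subgradient: for every prefix of the permutation, $h^{\pi}(S^{\pi}_i) = \sum_{k \leq i}[g(S^{\pi}_k) - g(S^{\pi}_{k-1})] = g(S^{\pi}_i)$. Thus $h^{\pi}$ agrees with $g$ along the chain $\emptyset = S^{\pi}_0 \subset S^{\pi}_1 \subset \cdots \subset V$, and in particular the surrogate feasibility test $h^{\pi}(X) \geq c$ coincides with $g(X) \geq c$ on chain prefixes. The natural greedy algorithm for the (modular) surrogate knapsack cover picks elements in increasing order of the cost-to-coverage ratio $f(j)/h^{\pi}(j)$ until feasibility is attained. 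Since the defining rule of $\pi(i)$ in Eqn.~\eqref{greedyssc} selects the element minimizing $f(j)/g(j \mid S^{\pi}_{i-1})$ among the remaining candidates, and since $h^{\pi}(\pi(i)) = g(\pi(i) \mid S^{\pi}_{i-1})$ by construction, the greedy ordering for the surrogate knapsack is exactly $\pi$ itself.

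Putting the two observations together, running the subproblem greedily selects elements $\pi(1), \pi(2), \ldots, \pi(N)$, where $N$ is the smallest index with $g(S^{\pi}_N) \geq c$ — which is precisely the trace of Wolsey's greedy for SSC, and yields a feasible set for the original problem (since $g(S^{\pi}_N) \geq c$ by choice of $N$). The main subtlety to handle carefully is the self-referential nature of the construction: $h^{\pi}$ is defined using $\pi$, yet the greedy on the surrogate reorders elements by $f(j)/h^{\pi}(j)$. The chain-tightness identity $h^{\pi}(\pi(i)) = g(\pi(i)\mid S^{\pi}_{i-1})$ resolves this self-reference, so one iteration of Algorithm~\ref{alg:framework} with these surrogates executes exactly Wolsey's SSC greedy. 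Elements chosen after position $N$ are irrelevant since the algorithm halts upon achieving coverage.
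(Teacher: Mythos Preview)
Your proposal is correct and follows essentially the same route as the paper's proof: both observe that with $\hat f=f$ modular and $\hat g=h^\pi$ modular the surrogate is a knapsack cover, that $S^\pi_N$ is feasible in the surrogate by chain-tightness $h^\pi(S^\pi_N)=g(S^\pi_N)\geq c$, and that the knapsack greedy on the surrogate reproduces Wolsey's SSC greedy. Your treatment is in fact more explicit than the paper's; the only point worth spelling out further is that the static sort by $f(j)/h^\pi(j)$ recovers $\pi$ because the ratios $f(\pi(i))/g(\pi(i)\mid S^\pi_{i-1})$ are nondecreasing in $i$, which uses the diminishing-returns property of $g$.
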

}
\arxiv{The surrogate problem in this instance is a simple knapsack problem
that can be solved nearly optimally using dynamic
programming~\cite{vazirani2004approximation}. As stated in the
proof, the greedy
algorithm for the submodular set cover
problem~\cite{wolsey1982analysis} is in fact equivalent to
using the greedy algorithm for the knapsack
problem~\cite{kellerer2004knapsack}, which is in fact suboptimal.} When
$g$ is integral, the guarantee of the greedy algorithm is $H_g
\triangleq H(\max_j g(j))$, where $H(d) = \sum_{i = 1}^d
\frac{1}{i}$~\cite{wolsey1982analysis} (henceforth we will use $H_g$ for this
quantity).  This factor is tight
up to lower-order terms~\cite{feige1998threshold}. Furthermore, since
this algorithm directly solves SSC, we call it the \emph{primal greedy}. We could also
solve SSC by looking at its \emph{dual}, which is SK~\cite{sviridenko2004note}. Although SSC does
not admit any constant-factor approximation algorithms
\cite{feige1998threshold}, we can obtain a constant-factor {\em
  bi-criterion} guarantee:\looseness-1
\begin{lemma}
\label{thm:dual_greedy}
  Using the greedy algorithm for SK~\cite{sviridenko2004note} as the approximation oracle in
  Algorithm~\ref{alg:alg2} provides a $[1 + \epsilon, 1 - e^{-1}]$
  bi-criterion approximation algorithm for SSC, for any $\epsilon > 0$.
\end{lemma}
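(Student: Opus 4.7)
The plan is to apply the reduction from SCSK to SCSC given by Algorithm~\ref{alg:alg2} (the second half of Theorem~\ref{thm1}) using Sviridenko's greedy algorithm as the SK oracle. Since SSC is precisely the special case of SCSC in which the cost function $f$ is modular, and SK is precisely the special case of SCSK with $f$ modular, the reduction stays within these special cases throughout all its internal calls.

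First I would recall the guarantee for the SK greedy: given a modular $f$ and a monotone submodular $g$, Sviridenko's partial enumeration + greedy algorithm returns a set $X$ that is strictly feasible, i.e.\ $f(X) \leq b$, and satisfies $g(X) \geq (1 - e^{-1}) g(X^\ast)$, where $X^\ast$ is an optimal solution of $\max\{g(X): f(X) \leq b\}$. In the bi-criterion language of this paper, this exhibits the greedy SK algorithm as a $[\rho,\sigma] = [1 - e^{-1}, 1]$ approximation oracle for SCSK on instances with modular $f$.

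Next I would plug this oracle into Algorithm~\ref{alg:alg2}. The only thing to verify is that every call made by the outer loop is a valid SK instance: the cost function $f$ passed to the oracle is the same modular $f$ of the original SSC instance, and only the budget parameter $b$ varies across iterations. Hence each inner call legitimately invokes the SK greedy, returning a feasible set with the $1 - e^{-1}$ guarantee. By the second half of Theorem~\ref{thm1}, Algorithm~\ref{alg:alg2} then produces a $[(1+\epsilon)\sigma,\, \rho] = [(1+\epsilon)\cdot 1,\, 1 - e^{-1}] = [1+\epsilon,\, 1 - e^{-1}]$ bi-criterion approximation for the underlying SCSC instance, which is exactly the SSC problem. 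The number of oracle calls is $\log_{1+\epsilon}[f(V)/\min_j f(j)]$, as given by Theorem~\ref{thm1}.

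There is no real obstacle: the lemma is essentially a corollary of Theorem~\ref{thm1} combined with Sviridenko's result. The only subtlety worth stating explicitly in the write-up is that $\sigma = 1$ (rather than some $\sigma > 1$), which is what lets the final guarantee read $[1+\epsilon, 1 - e^{-1}]$ rather than a weaker bi-criterion bound; this relies on Sviridenko's algorithm returning a strictly feasible (not merely approximately feasible) solution.
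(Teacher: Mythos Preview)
Your proposal is correct and matches the paper's approach exactly: the paper's own proof is a single sentence stating that the result follows immediately from Sviridenko's $(1-e^{-1})$ guarantee for SK together with Theorem~\ref{thm1}. You have simply spelled out the details (in particular that $\sigma=1$ because Sviridenko's algorithm returns a strictly feasible set), which is fine.
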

\JTR{Note that the Sviridenko algorithm of \cite{sviridenko2004note} requires
partial enumeration which is an $O(n^3)$ step rendering this approach no longer
practical. In our 2010 
paper \cite{linbudget}
we show that w/o the $O(n^3)$ step still
has a $1-1/\sqrt{e} \approx 0.33$ algorithm which still works well in practice,
and that would also give a different bi-criterion approximation here
of $[1+\epsilon,1-e^{-1/2}]$.
Note that this was also shown in \url{http://las.ethz.ch/files/krause05note.pdf} ,
I added the bib here \cite{krause05note},
but
at the time in 2010, we weren't aware of Andreas's result (my fault :-).
}\RTJ{Added a note below.}
\arxiv{
\begin{proof}}We call this the \emph{dual greedy}. This result follows immediately from the guarantee of the submodular cost knapsack problem~\cite{sviridenko2004note} and Theorem~\ref{thm1}. \arxiv{\end{proof}} We remark that we can also use a simpler version of the greedy iteration at every iteration~\cite{linbudget, krause05note} and we obtain a guarantee of $(1 + \epsilon, 1/2(1 - e^{-1}))$. In practice, however, both these factors are almost $1$ and hence the simple variant of the greedy algorithm suffices.\looseness-1
\arxiv{

An interesting connection between the greedy algorithm and the induced orderings, allows us to further simplify this dual algorithm. A nice property of the greedy algorithm for the submodular knapsack problem is that it can be completely parameterized by the chain of sets (this holds for the greedy algorithm of~\cite{linbudget, krause05note} for knapsack constraints and the basic greedy algorithm of~\cite{nemhauser78} under cardinality constraints). In particular, having computed the greedy chain of sets, and given a value of $b$ or the budget, we can easily find the corresponding set in $O(\log n)$ time using binary search. Moreover, this also implies that we can do the transformation algorithms by just iterating through the chain of sets once. In particular, the linear search over the different values of $\epsilon$ is equivalent to the linear search over the different chain of sets. Moreover, we could also do the much faster binary search. Hence the complexity of the dual greedy algorithm is almost identical to the primal greedy one for the submodular set cover problem.

It is also important to put the bicriterion result into perspective. Notice that the bicriterion guarantee suggests that we find only an approximate feasible solution. In particular, the dual greedy algorithm provides a type-I bi-criterion approximation, -- that the solution obtained by running the algorithm with a cover constraint of $(1 - 1/e) c$ is competitive to the optimal solution with a cover constraint of $c$. However, we can also obtain a type-II bi-criterion approximation by running the dual greedy algorithm until it satisfies the cover constraint of $c$. In this case, we would obtain a feasible solution. The guarantee, however, would say that the resulting solution would be competitive to the optimal solution obtained with a cover constraint of $c/(1 - e^{-1})$. Furthermore, these factors are in practice close to $1$, and the primal and dual greedy algorithms would both perform very well empirically.

}

\textbf{Iterated Submodular Set Cover (ISSC): }
We next investigate an algorithm for the general SCSC problem when both $f$ and $g$
are submodular. The idea here is to iteratively solve the submodular
set cover problem which can be done by replacing $f$ by a modular
upper bound at every iteration. In particular, this can be seen as a
variant of Algorithm~\ref{alg:framework}, where we start with \arxivalt{$X^0 =
\emptyset$ and choose $\hat{f_t}(X) = m^f_{X^t, 2}(X)$ at every
iteration (alternatively, we can choose $\hat{f_t}(X) = m^f_{X^t,
  1}(X)$). At the first iteration with $X^0 = \emptyset$, either
variant then corresponds to the set cover problem with the simple
modular upper bound $f(X) \leq m^f_{\emptyset}(X) = \sum_{j \in X}
f(j)$ where $m^f_{X^t}$ refers to either variant.}{$X^0 =
\emptyset$ and choose $\hat{f_t}(X) = m^f_{X^t}(X)$ at every
iteration.} The surrogate
problem at each iteration becomes\arxiv{:}
\arxivalt{
\begin{alignat*}{2}
     \text{minimize }        & m^f_{X^t}(X) \\
    \text{subject to }     & g(X) \geq c.
  \end{alignat*}
}{
$\min\{ m^f_{X^t}(X) | g(X) \geq c\}$.
}Hence, each iteration is an instance of SSC and can be
solved nearly optimally using the greedy algorithm.  We can continue
this algorithm for $T$ iterations or until convergence. An analysis
very similar to the ones in~\cite{rkiyeruai2012,
  rkiyersemiframework2013} will reveal polynomial time convergence.
\JTR{for completeness, in supplementary material, this should ultimately be
  included here in its current form.} Since each iteration is only the
greedy algorithm, this approach is also highly practical and
scalable. \arxiv{Since there are two approaches to solve the set cover
problem (the primal 
approach of 
Lemma~\ref{setcover}
and the dual greedy
approach of 
Lemma~\ref{thm:dual_greedy}), we have two forms of ISSC,
the \emph{primal ISSC} and the \emph{dual ISSC}. The
following shows the resulting theoretical guarantees:}\looseness-1
\begin{theorem}\label{MIguar}
  \arxiv{The primal} ISSC\arxiv{ algorithm} obtains an approximation factor of
  $\frac{K_g H_g}{1 + (K_g - 1)(1 - \curvf{f})} \leq
    \frac{n}{1 + (n - 1)(1 - \curvf{f})} H_g$ where $K_g = 1 +
  \max \{|X|: g(X) < c\}$ and $H_g$ is the approximation factor
  of the submodular set cover using $g$. \arxiv{Similarly the dual ISSC obtains
  a bi-criterion guarantee of $\left[\frac{(1 + \epsilon) K_g}{1 + (K_g -
    1)(1 - \curvf{f})}, 1 - e^{-1}\right]$.}
\end{theorem}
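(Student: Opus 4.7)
The plan is to analyze only the first iteration of ISSC (where $X^0=\emptyset$) and then observe that subsequent iterations can only maintain or improve the guarantee, mirroring the convergence argument of~\cite{rkiyersemiframework2013}. At the first iteration both modular upper bounds collapse to the subadditive bound $m^f_\emptyset(Y) = \sum_{j\in Y} f(j)$, so the surrogate problem is a classical SSC instance $\min\{\sum_{j\in X} f(j) : g(X)\geq c\}$ which, by Lemma~\ref{setcover}, is solved by the primal greedy within a factor of $H_g$. Writing $\hat{X}$ for the greedy output and $X^*$ for an optimum of SCSC, subadditivity of $f$ and feasibility of $X^*$ give
$f(\hat{X})\leq \sum_{j\in \hat{X}} f(j)\leq H_g \sum_{j\in X^*} f(j).$

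The heart of the argument is then a curvature-based bound converting $\sum_{j\in X^*} f(j)$ back to $f(X^*)$. I would adapt the standard Conforti--Cornuéjols calculation: order $X^* = \{x_1,\ldots,x_k\}$ so that $x_1 \in \argmax_{j\in X^*} f(j)$, write $f(X^*) = f(x_1) + \sum_{i\geq 2} f(x_i \mid S^\perm_{i-1})$, and apply $f(x_i\mid S^\perm_{i-1})\geq f(x_i\mid V\setminus x_i) \geq (1-\curvf{f})f(x_i)$ together with $f(x_1) \geq \frac{1}{k}\sum_{j\in X^*} f(j)$. This yields
$f(X^*)\geq \frac{1+(k-1)(1-\curvf{f})}{k}\sum_{j\in X^*} f(j),$
equivalently $\sum_{j\in X^*} f(j)\leq \frac{k}{1+(k-1)(1-\curvf{f})}f(X^*).$

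To promote $|X^*|$ to $K_g$, I would observe that, since $f$ is monotone, we may WLOG take $X^*$ to be a minimum-cardinality optimum (replacing $X^*$ by any strict subset of equal $f$-value that remains feasible). For such an $X^*$, removing any single element must violate the cover, i.e.\ $g(X^*\setminus j) < c$ for every $j\in X^*$, hence $|X^*|-1\leq K_g-1$, giving $|X^*|\leq K_g$. The map $x\mapsto x/[1+(x-1)(1-\curvf{f})]$ is nondecreasing, so the curvature bound combined with the $H_g$-approximation produces
$f(\hat{X})\leq \frac{K_g H_g}{1+(K_g-1)(1-\curvf{f})}\,f(X^*),$
and the weaker bound with $n$ in place of $K_g$ follows from $K_g\leq n$. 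For the dual ISSC, I would substitute Lemma~\ref{thm:dual_greedy}'s $[1+\epsilon,1-e^{-1}]$ bi-criterion SK-based solver for the SSC surrogate; this swaps $H_g$ for $(1+\epsilon)$ in the same chain of inequalities while only covering $(1-e^{-1})c$ of $g$, yielding the stated bi-criterion $[\frac{(1+\epsilon)K_g}{1+(K_g-1)(1-\curvf{f})},\,1-e^{-1}]$.

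I expect the main obstacle to be the curvature step: getting the sharp factor $\frac{k}{1+(k-1)(1-\curvf{f})}$ rather than the looser $\frac{1}{1-\curvf{f}}$ requires the maximizer-first ordering and careful handling of the first gain $f(x_1)$ (which is not subject to the $(1-\curvf{f})$ discount). A secondary subtlety is the minimality reduction on $X^*$, which needs monotonicity of $f$ and must be argued before invoking the $K_g$ bound; iteration-monotonicity for $t\geq 1$ is routine once one notes $m^f_{X^t}(X^t)=f(X^t)$ and the surrogate SSC admits $X^t$ as a feasible witness.
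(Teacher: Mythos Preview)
Your proposal is correct and follows essentially the same route as the paper. The paper's own proof simply invokes Theorem~5.4 of~\cite{rkiyersemiframework2013} together with the curvature inequality $m^f_{\emptyset}(X)\leq \frac{|X|}{1+(|X|-1)(1-\curvf{f})}\,f(X)$ from~\cite{curvaturemin} and the bound $|X^*|\leq K_g$, whereas you unpack these cited facts explicitly (the Conforti--Cornu\'ejols ordering argument and the minimality reduction on $X^*$); the dual part is likewise identical in structure, replacing $H_g$ by $(1+\epsilon)$ via Lemma~\ref{thm:dual_greedy}.
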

\JTR{again, we can get a more scalable version w/o the $O(n^3)$ step and with approx $1-e^{-1/2}$, but
in this case maybe it is no longer practical due to the need to use the
dual solver of Algorithm~\ref{alg:alg2}.}\RTJ{Yeah. I think given that we already mentioned this remark in the case of SSC, we can just keep the $1 - 1/e$ henceforth.}
\arxiv{\begin{proof}
The first part of the result follows directly from Theorem 5.4 in~\cite{rkiyersemiframework2013}. In particular, the result from~\cite{rkiyersemiframework2013} ensures a guarantee of
\begin{align}
\frac{\beta|X^*|}{1 + (|X^*| - 1)(1 - \curvf{f})} 
\end{align}
for the problem of $\min\{f(X) | X \in \mathcal C\}$ where $\beta$ is
the approximation guarantee of solving a modular function over
$\mathcal C$ where $\mathcal C$ is the feasible set. In this case,
$\beta = H_g$ and $|X^*| \leq K_g$.

When using the dual greedy approach at every iteration, we
can use a similar form of the result in the bi-criterion sense. Consider only the first iteration of this algorithm (due to the monotonicity of the algorithm, we will only improve the objective value). We are then guaranteed to obtain a set $\hat{X}$ such that (denote $X^1$ as the solution after the first iteration)
\begin{align}
f(\hat{X}) \leq f(X^1) \leq m^f_{\emptyset}(X^1) \leq (1 + \epsilon)m^f_{\emptyset}(X^*) \leq \frac{K_g (1 + \epsilon)}{1 + (K_g - 1)(1 - \curvf{f})} f(X^*)
\end{align}
The inequalities above follow from the fact that the modular upper bound $m^f_{\emptyset}(X)$ satisfies~\cite{curvaturemin}, 
\begin{align}
m^f_{\emptyset}(X) \leq f(X) \leq \frac{|X|}{1 + (|X| - 1)(1 - \curvf{f})} f(X)
\end{align} 
and the fact that $X^1$ which is the solution obtained through the dual set cover with a cover constraint $(1 - e^{-1})c$ satisfies $m^f_{\emptyset}(X^1) \leq (1 + \epsilon)m^f_{\emptyset}(X^*)$.
In the above, $X^*$ is the optimal solution to the problem $\min\{ f(X) | g(X)
\geq c\}$. We could also run the dual set cover algorithm to obtain a feasible solution (i.e., a type-II guarantee). In this case, the guarantee would compete with the optimal solution satisfying a cover constraint of $c/(1 - e^{-1})$.
\end{proof}}

From the above, it is clear that $K_g \leq n$. Notice also that $H_g$
is essentially a log-factor. We also see an interesting effect of the
curvature $\curvf{f}$ of $f$. When $f$ is modular ($\curvf{f} = 0$),
we recover the approximation guarantee of the submodular set cover
problem. Similarly, when $f$ has restricted curvature, the guarantees can be much better. \arxiv{For example, using square-root over modular function $f(X) = \sum_{i = 1}^k \sqrt{w_i(X)}$,  which is common model used in applications~\cite{rkiyeruai2012, lin2012submodularity, jegelkacvpr}, the worst case guarantee is $H_g \sqrt{K_g}$. This follows directly from the results in~\cite{curvaturemin}. } Moreover, the approximation guarantee already holds after the
first iteration, so additional iterations can only further improve the
objective.  \arxiv{

We remark here that a special case of ISSC, using only the first iteration (i.e., the simple modular upper bound of $f$) was considered in\arxivalt{~\cite{wan2010greedy,du2011minimum}}{~\cite{wan2010greedy}}. Our algorithm not only possibly improves upon theirs, but our approximation guarantee is also more explicit than theirs. In particular, they show a guarantee of $\nu_f H_g$, where $\nu_f = \min\{\frac{\sum_{i \in X} f(i)}{f(X)} | g(X) = g(V)\}$. Since this factor $\nu_f$ itself involves an optimization problem, it is not clear how to efficiently compute this factor. Moreover given a submodular function $f$, it is also not evident how good this factor is. While our guarantee is an upper bound of $\nu_f H_g$, it is much more explicit in it’s dependence on the parameters of the problem. It can also be computed efficiently and has an intuitive significance related to the curvature of the function. Furthermore, our bound is also tight since with, for example, $f(X) = \min\{|X|, 1\}$, our exactly matches the bound of~\cite{wan2010greedy, du2011minimum}. Lastly our algorithm also potentially improves upon theirs thanks to it’s iterative nature.}

\arxiv{For another variant of this algorithm, we can replace $g$ with
  its greedy modular lower bound at every iteration. Then, rather
  than solving every iteration through the greedy algorithm, we can
  solve every iteration as a knapsack problem (minimizing a modular
  function over a modular lower bound constraint)
  \cite{kellerer2004knapsack}, using say, a dynamic programming based approach. This could potentially improve over the
  greedy variant, but at a potentially higher computational
  cost.}\looseness-1

\textbf{Ellipsoidal Approximation based Submodular Set Cover (EASSC): } 
In this setting, we use the ellipsoidal approximation discussed in
\S\ref{background}. We can compute the $\curvf{f}$-\truncated{}
version of $f$ ($f^{\kappa}$, see \S\ref{background}), and then
compute its ellipsoidal approximation $\eax{f^\kappa}$.  We then
define the function $\hat{f}(X) = f^{\text{ea}}(X) = \curvf{f} \sqrt{w^{f^{\kappa}}(X)} + (1 - \curvf{f}) \sum_{j \in X} f(j)$ and use this as
the surrogate function $\hat{f}$ for $f$. We choose $\hat{g}$ as $g$
itself. The surrogate problem becomes:
\begin{align}\label{EASSCeq}
\min \left\{\curvf{f} \sqrt{w^{f^{\kappa}}(X)} + (1 - \curvf{f}) \sum_{j \in X} f(j) \ \bigg\vert \ g(X) \geq c \right\}. 
\end{align}
While function $\hat{f}(X)=f^{\text{ea}}(X)$ is not modular, it is a
weighted sum of a concave over modular function and a modular
function. Fortunately, we can use the result
from~\cite{nikolova2010approximation}, where they show that any
function of the form of $\sqrt{w_1(X)} + w_2(X)$ can be optimized over
any polytope $\mathcal P$ with an approximation factor of $\beta (1 +
\epsilon)$ for any $\epsilon > 0$, where $\beta$ is the approximation
factor of optimizing a modular function over $\mathcal P$. The
complexity of this algorithm is polynomial in $n$ and
$\frac{1}{\epsilon}$. \arxiv{The main idea of their paper is to reduce
  the problem of minimizing $\sqrt{w_1(X)} + w_2(X)$, into $\log n$
  problems of minimizing a modular function over the polytope.} We use
their algorithm to minimize $f^{\text{ea}}(X)$ over the
\emph{submodular set cover} constraint and hence we call this algorithm EASSC. \arxiv{Again we have the two variants, \emph{primal
  EASSC} and \emph{dual EASSC}, which essentially use at every iteration the
primal and dual forms of set cover.}
\begin{theorem}\label{EAguar}
  \arxiv{The primal}
    EASSC obtains a guarantee of $O(\frac{\sqrt{n}\log n H_g}{ 1 +
    (\sqrt{n}\log n - 1)(1 - \curvf{f})} )$, where $H_g$ is the
  approximation guarantee of the set cover problem. \arxiv{Moreover, the dual EASSC obtains a bi-criterion approximation of
  $\left[O(\frac{\sqrt{n}\log n}{ 1 + (\sqrt{n}\log n - 1)(1 -
    \curvf{f})}), 1 - e^{-1}\right]$.}
\end{theorem}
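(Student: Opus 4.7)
The plan is to combine the two-sided multiplicative bound supplied by the ellipsoidal approximation with the known approximation algorithm of Nikolova et~al.~\cite{nikolova2010approximation} for optimizing $\sqrt{w_1(X)} + w_2(X)$ subject to a constraint over which a modular oracle is available. Throughout, let $\alpha := O\!\left(\tfrac{\sqrt{n}\log n}{1+(\sqrt{n}\log n - 1)(1-\curvf{f})}\right)$ denote the factor from Lemma~\ref{cor:learn}, so that $f^{\text{ea}}(X) \leq f(X) \leq \alpha \, f^{\text{ea}}(X)$ for every $X \subseteq V$.

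For the primal EASSC, I would let $X^*$ be an optimal solution of SCSC and let $\hat X$ denote the set returned by solving~\eqref{EASSCeq} via Nikolova's algorithm. Since $f^{\text{ea}}$ has exactly the form $\sqrt{w_1(X)} + w_2(X)$ with non-negative $w_1, w_2$, and since the submodular set-cover greedy of Lemma~\ref{setcover} yields an $H_g$-approximation for minimizing any modular function subject to $g(X) \geq c$, applying~\cite{nikolova2010approximation} gives $f^{\text{ea}}(\hat X) \leq (1+\epsilon)\, H_g \cdot f^{\text{ea}}(X^{**})$, where $X^{**}$ is the optimizer of the surrogate~\eqref{EASSCeq}. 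Chaining the two-sided ellipsoidal bound then yields
\[
f(\hat X) \;\leq\; \alpha\, f^{\text{ea}}(\hat X) \;\leq\; \alpha(1+\epsilon) H_g\, f^{\text{ea}}(X^{**}) \;\leq\; \alpha(1+\epsilon) H_g\, f^{\text{ea}}(X^*) \;\leq\; \alpha(1+\epsilon) H_g\, f(X^*),
\]
where the penultimate step uses feasibility of $X^*$ in the surrogate and the last uses the lower-bound half of the ellipsoidal guarantee. Absorbing $(1+\epsilon)$ into the $O(\cdot)$ gives exactly the stated guarantee.

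For the dual EASSC, the same substitution $f \mapsto f^{\text{ea}}$ is made, but now the surrogate SSC problem is attacked via its knapsack dual using Algorithm~\ref{alg:alg2} with the SK greedy as the oracle, exactly as in Lemma~\ref{thm:dual_greedy}. The Sviridenko greedy~\cite{sviridenko2004note} (or its Nikolova-style extension to the $\sqrt{w_1} + w_2$ objective, which yields an identical $(1-e^{-1})$ factor) produces a set $\hat X$ with $g(\hat X) \geq (1-e^{-1}) g(X^{**})$ and $f^{\text{ea}}(\hat X) \leq (1+\epsilon) b^{\text{ea}}$, where $b^{\text{ea}}$ is the relaxed budget in the surrogate. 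The two-sided ellipsoidal bound then translates approximate $f^{\text{ea}}$-feasibility into approximate $f$-feasibility with a blow-up of $\alpha$, while preserving the $(1-e^{-1})$ factor on $g$, producing the claimed $[\alpha, 1-e^{-1}]$ bi-criterion guarantee after folding $(1+\epsilon)$ into the $O(\cdot)$.

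The main obstacle I anticipate is not a deep one but a bookkeeping subtlety: we must verify that Nikolova's reduction, which expresses the concave-plus-modular minimization as $O(\log n)$ purely modular minimizations, transports the $H_g$-approximation for modular SSC through to an $H_g(1+\epsilon)$-approximation for the $f^{\text{ea}}$ surrogate, and that the factor $\alpha$ from Lemma~\ref{cor:learn} composes multiplicatively rather than additively with $H_g$. Once the composition is set up correctly, both statements of the theorem follow by the short chain of inequalities sketched above.
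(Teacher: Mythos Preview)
Your proposal is correct and follows essentially the same route as the paper: invoke Nikolova's reduction to obtain an $H_g(1+\epsilon)$-approximate minimizer of $f^{\text{ea}}$ over the cover constraint, then sandwich with the two-sided bound of Lemma~\ref{cor:learn} to convert this into the stated factor for $f$; the dual part likewise swaps in the dual (bi-criterion) SSC oracle of Lemma~\ref{thm:dual_greedy} at each modular subproblem. Your chain of inequalities and the identification of the multiplicative composition of $\alpha$ with $H_g$ mirror the paper's proof almost verbatim.
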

\arxiv{
\begin{proof}
  The idea of the proof is to use the result
  from~\cite{nikolova2010approximation} where they show that any
  function of the form $\lambda_1 \sqrt{m_1(X)} + \lambda_2 m_2(X)$
  where $\lambda_1 \geq 0, \lambda_2 \geq 0$ and $m_1$ and $m_2$ are
  positive modular functions has a FPTAS, provided a modular function
  can easily be optimized over $\mathcal C$. Note that our function is
  exactly of that form. Hence, $\hat{f}(X)$ can be approximately
  optimized over $\mathcal C$. It now remains to show that this
  translates into the approximation guarantee. From
  Lemma~\ref{cor:learn}, we know that there exists a $\hat{f}$
  such that $\hat{f}(X) \leq f(X) \leq \beta(n) \hat{f}(X), \forall X$
  where
\begin{align}
\beta(n) = O\left(\frac{\sqrt{n} \log n}{(\sqrt{n} \log n - 1)(1 - \curvf{f}) + 1)}\right). 
\end{align}
Then, if $\hat{X}$ is the $1 + \epsilon$ approximately optimal
solution for minimizing $\hat{f}$ over $\{X: g(X) \geq c\}$, we
have that:
\begin{align}
  f(\hat{X}) 
  \leq \beta(n) \hat{f}(\hat{X}) 
  \leq H_g \beta(n) (1 +  \epsilon) \hat{f}(X^*) 
  \leq H_g \beta(n) (1 + \epsilon) f(X^*),
\end{align}
where $X^*$ is the optimal solution. We can set $\epsilon$ to any
constant, say $1$, and we get the result.
 
 The dual guarantee again follows in a very similar manner thanks to the guarantee for the dual SSC.
\end{proof}
}

If the function $f$ has $\curvf{f} = 1$,
\JTR{This was originally $\curvf{f} = 1$ but I changed it to $\curvf{f} = 0$,
and this occurred a few other places in the paper. Note that according
to the def of $f^{\text{ea}}$, we get this simpler form $\eaxa{f}{X}$
only when $\curvf{f} = 0$, but we should double check this, and please
lets talk about it. The thing I don't understand is that
if $\curvf{f} = 0$ then $f$ is modular and we don't need the EA approximation.
Also see the comment tagged 5XXXXX below.}\RTJ{Fixed above as well as earlier.}
 we can use a much simpler
algorithm. In particular, \arxiv{since the ellipsoidal approximation is of
the form of $f^{\text{ea}}(X) = \eaxa{f}{X}$, }we can minimize $(f^{\text{ea}}(X))^2 =
w^f(X)$ at every iteration, giving a surrogate problem of the form
\arxiv{\begin{align}
\min\{ w^f(X) | g(X) \geq c\}.
\end{align}}
\notarxiv{$\min\{ w^f(X) | g(X) \geq c\}$.}
This is directly an instance of SSC, and in contrast to EASSC, we just need to solve SSC once. We call this algorithm EASSCc. \arxiv{This guarantee is tight up to $\log$ factors when $\curvf{f} = 1$.}
\begin{corollary}
\arxiv{The primal} EASSCc obtains an approximation guarantee of $O(\sqrt{n} \log n \sqrt{H_g})$. \arxiv{Similarly, the dual EASSCc obtains a bicriterion guarantee of $[O(\sqrt{n} \log n), 1 - e^{-1}]$.}
\end{corollary}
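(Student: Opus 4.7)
The plan is to chain the ellipsoidal approximation bounds together with the guarantee for the classical submodular set cover problem, exploiting the fact that when $\curvf{f} = 1$ the surrogate function collapses to a pure modular-under-square-root form $\sqrt{w^f(X)}$, so squaring removes the concavity and leaves a standard SSC instance.

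First I would apply the result from Lemma~\ref{cor:learn} specialized to $\curvf{f} = 1$: this gives $\sqrt{w^f(X)} \leq f(X) \leq O(\sqrt{n}\log n)\sqrt{w^f(X)}$ for every $X \subseteq V$. Next, I would note that the surrogate problem $\min\{w^f(X) \mid g(X) \geq c\}$ minimizes a nonnegative modular function subject to a submodular cover constraint, so by Lemma~\ref{setcover} the greedy SSC algorithm returns a set $\hat{X}$ with $w^f(\hat{X}) \leq H_g \cdot w^f(\tilde{X})$, where $\tilde{X}$ is any optimal solution of the surrogate. Let $X^*$ denote an optimum of the original SCSC instance; since $X^*$ is feasible for the surrogate, $w^f(\tilde{X}) \leq w^f(X^*)$.

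Chaining everything, I would write
\begin{align*}
f(\hat{X}) \leq O(\sqrt{n}\log n)\sqrt{w^f(\hat{X})} \leq O(\sqrt{n}\log n)\sqrt{H_g \cdot w^f(X^*)} \leq O\bigl(\sqrt{n}\log n \sqrt{H_g}\bigr) f(X^*),
\end{align*}
where the last inequality uses the lower bound $\sqrt{w^f(X^*)} \leq f(X^*)$ from the EA. This yields exactly the claimed $O(\sqrt{n}\log n \sqrt{H_g})$ approximation factor, with the characteristic $\sqrt{H_g}$ term arising precisely because the SSC guarantee is applied to $w^f$ but the objective we care about is essentially $\sqrt{w^f}$.

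There is no real obstacle in this proof; the only subtle point is to observe that, in contrast to EASSC, the case $\curvf{f} = 1$ allows us to bypass the machinery of Nikolova~\cite{nikolova2010approximation} for $\sqrt{w_1(X)} + w_2(X)$ objectives, since squaring preserves the minimizer of $\sqrt{w^f(X)}$ over the submodular cover constraint. For the dual variant (if the \texttt{arxiv} version is compiled), the same argument goes through by replacing the primal SSC guarantee with the bicriterion $[1+\epsilon, 1 - e^{-1}]$ guarantee of Lemma~\ref{thm:dual_greedy}, giving the bicriterion $[O(\sqrt{n}\log n), 1 - e^{-1}]$ bound (the $\sqrt{H_g}$ factor disappears because the bicriterion SSC guarantee has a constant ratio).
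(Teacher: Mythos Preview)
Your proof is correct and follows essentially the same chain of inequalities as the paper's own argument: use the Goemans et al.\ sandwich bound $\sqrt{w^f(X)} \leq f(X) \leq O(\sqrt{n}\log n)\sqrt{w^f(X)}$, apply the SSC guarantee to the modular surrogate $w^f$, and take square roots so that $H_g$ becomes $\sqrt{H_g}$. The only cosmetic point is that Lemma~\ref{cor:learn} is stated for $\curvf{f} < 1$, so strictly speaking you are invoking the underlying Goemans et al.\ result (stated just before that lemma) rather than the lemma itself; the paper does the same.
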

\arxiv{\begin{proof}
Let $\hat{X}$ be a set such that,
\begin{align}
w(\hat{X}) \geq H_g \min\{ w(X) | g(X) \geq c\} 
\end{align}
Then denote $\alpha(n) = O(\sqrt{n}\log n)$.
\begin{align}
f(\hat{X}) \leq \alpha(n) \sqrt{w(\hat{X})} \leq \sqrt{H_g} \alpha(n) \sqrt{w(X^*)} \leq \sqrt{H_g} \alpha(n) f(X^*)
\end{align}
In the above, $X^* = \argmin\{ f(X) | g(X) \geq c\}$. 

The result for the dual variant can also be similarly shown.
\end{proof}
}

\arxiv{
\begin{table}[tbh]
\small{
\begin{center}
\begin{tabular}{ |c|c|c|c| }
\hline
\multirow{2}{*}{Problem} & \multirow{2}{*}{Algorithm} & Approx. factor$^{*}$ & Hardness$^{*}$\\[1ex]
 & & Bi-Criterion factor$^{\#}$ & (Bi-Criterion $\frac{\sigma}{\rho})$$^{\#}$ \\[1ex] \hline 
\multirow{2}{*}{SSC} & Primal Greedy & $H(\max_j f(j))$$^{*}$ & $\log n$$^{*}$ \\[1ex]
 & Dual Greedy & $(1 + \epsilon, 1 - \frac{1}{e})$$^{\#}$ & $1 - 1/e$$^{\#}$\\[1ex]
\hline
\multirow{4}{*}{SCSC} & Primal ISSC & $\frac{n H_g}{1 + (n - 1)(1 - \curvf{f})}$$^{*}$ & \multirow{4}{*}{$\Omega(\frac{\sqrt{n}}{1 + (\sqrt{n} - 1)(1 - \curvf{f})})$$^{*, \#}$} \\ [1.5ex]
 & Dual ISSC & $[\frac{(1 + \epsilon)n}{1 + (n - 1)(1 - \curvf{f})}, 1 - 1/e]$$^{\#}$ & \\ [1.5ex]
& Primal EASSC & $O(\frac{\sqrt{n}\log n H_g}{1 + (\sqrt{n}\log n - 1)(1 - \curvf{f})})$$^{*}$ & \\ [1.5ex]
& Dual EASSC & $[O(\frac{\sqrt{n}\log n}{1 + (\sqrt{n}\log n - 1)(1 - \curvf{f})}, 1 - 1/e]$$^{\#}$ & \\ [1.5ex]
& Primal EASSCc & $O(\sqrt{n}\log n \sqrt{H_g})$$^{*}$ & \\ [1ex]
& Dual EASSCc & $[O(\sqrt{n}\log n), 1 - 1/e]$$^{\#}$ & \\ [1ex]
\hline
SK & Greedy & $1 - 1/e$$^{*}$ & $1 - 1/e$$^{*}$ \\ [1ex]
\hline
\multirow{3}{*}{SCSK} & Greedy & $1/n$$^{*}$ & \multirow{4}{*}{$\Omega(\frac{\sqrt{n}}{1 + (\sqrt{n} - 1)(1 - \curvf{f})})$$^{*, \#}$} \\ [1.5ex]
& ISK & $[1 - e^{-1}, \frac{K_f}{1 + (K_f - 1)(1 - \curvf{f})}]$$^{\#}$ & \\ [1.5ex]
& Primal EASK & $[1 + \epsilon, O(\frac{\sqrt{n}\log n H_g}{1 + (\sqrt{n}\log n - 1)(1 - \curvf{f})})]$$^{\#}$ & \\ [1.5ex]
& Dual EASK & $[(1 - 1/e)(1 + \epsilon), O(\frac{\sqrt{n}\log n}{1 + (\sqrt{n}\log n - 1)(1 - \curvf{f})})]$$^{\#}$ & \\ [1.5ex]
& EASKc & $[1 - 1/e, O(\sqrt{n}\log n)]$$^{\#}$ & \\ [1ex]
\hline
\end{tabular}
\end{center}
\caption{Worst case approximation factors, hardness for Problems 1 
  and 2 and their special cases.} }
 \end{table}}
\normalsize

\subsection{Approximation Algorithms for SCSK}
\label{sec:appr-algor-scsk}

In this section, we describe our approximation algorithms for SCSK. We
note the dual nature of the algorithms in this current section to those
given in \S\ref{sec:appr-algor-scsc}. We first investigate a special case, the submodular knapsack (SK), and then provide three algorithms, two of them (Gr and ISK) being practical with slightly weaker theoretical guarantee, and another one (EASK) which is not scalable but has the tightest guarantee.\looseness-1

\textbf{Submodular Cost Knapsack (SK): }
We start with a special case of SCSK (Problem 2), where $f$ is a modular
function and $g$ is a submodular function. In this case, SCSK turns
into the SK problem for which the greedy
algorithm with partial enumeration provides a $1 - e^{-1}$
approximation~\cite{sviridenko2004note}. The greedy algorithm can be
seen as an instance of Algorithm~\ref{alg:framework} with $\hat{g}$
being the modular lower bound of $g$ and $\hat{f}$ being $f$, which is
already modular. \arxiv{In particular, we then get back the framework
  of~\cite{rkiyersemiframework2013}, where the authors show that
  choosing a permutation based on a greedy ordering, exactly analogous
  to Eqn.~\eqref{greedyssc}, provides the bounds.} In particular,
define:
 \begin{align}
\label{greedyknapsack}
 \perm(i) \in 
 \argmax \left\{\frac{g(j | S^{\perm}_{i-1})}{f(j)} \ \bigg\vert \ {j \notin S^{\perm}_{i-1}, 
          f(S^{\perm}_{i-1} \cup \{j\}) \leq b}\right\},
 \end{align} 
 where the remaining elements are chosen arbitrarily.  \arxiv{A slight
   catch however is that for the analysis to work, \cite{sviridenko2004note} needs to
   consider $n \choose 3$ instances of such orderings (partial
   enumeration), chosen by fixing the first three elements in the
   permutation~\cite{rkiyersemiframework2013}. We can however just choose
   the simple greedy ordering in one stage, to get a slightly worse
   approximation factor of $1 -
   e^{-1/2}$~\cite{rkiyersemiframework2013, linbudget,krause05note}.
}  \arxivalt{
\begin{lemma}\label{greedybudget}~\cite{rkiyersemiframework2013}
Choosing the surrogate function $\hat{f}$ as $f$ and $\hat{g}$ as $h^{\perm}$ in Algorithm~\ref{alg:framework} yields a set $X^g$: 
\begin{equation}
  \label{eq:8}
  \max\{\max_{i: f(i) \leq b} g(i), g(X^g)\} \geq 1/2(1 - 1/e)g(X^*).
\end{equation}
Let $\perm^{ijk}$ be a permutation with $i,j,k$ in the first three
positions, and the remaining arrangement greedy. Running $O(n^3)$
restarts of one iteration of Algorithm~\ref{alg:framework} yields sets
$X_{ijk}$ with
\begin{equation}
  \label{eq:9}
  \max_{i,j,k \in V} f(X_{ijk}) \,\geq (1-1/e) f(X^*).
\end{equation} 
\end{lemma}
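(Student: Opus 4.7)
The plan is to exhibit the output $X^g$ of Algorithm~\ref{alg:framework} (run with $\hat{f}=f$ and $\hat{g}=h^{\perm}$) as exactly the one-pass cost-benefit greedy of \cite{sviridenko2004note,linbudget,krause05note} applied to the SK instance, and then to quote the known analyses of that procedure to obtain both inequalities.

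First I would observe that with $\hat{g}=h^{\perm}$ the surrogate maximization becomes the linear knapsack $\max\{h^{\perm}(X) : f(X)\le b\}$. By the chain definition of the subgradient, $h^{\perm}(\perm(i)) = g(S^{\perm}_i)-g(S^{\perm}_{i-1}) = g(\perm(i)\mid S^{\perm}_{i-1})$, so the ratio $h^{\perm}(\perm(i))/f(\perm(i))$ coincides with the cost-benefit score $g(\perm(i)\mid S^{\perm}_{i-1})/f(\perm(i))$. Because $\perm$ is itself defined by maximizing this score at each step (Eqn.~\eqref{greedyknapsack}), running the standard greedy knapsack solver on the surrogate picks elements in exactly the order $\perm$ and stops at the first prefix that would overrun $b$. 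Hence the algorithm's output is precisely the set produced by the classical cost-benefit greedy applied to $(f,g)$, and there is no loss in analyzing that procedure directly.

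Second, for Eqn.~\eqref{eq:8}, I would invoke the standard one-pass analysis from \cite{linbudget,krause05note}. Let $v$ be the first element the greedy rejects because of the budget; then $X^g\cup\{v\}$ equals some prefix $S^{\perm}_k$. Submodularity combined with the ratio rule yields the classical bound $g(S^{\perm}_k)\ge(1-1/e)\,g(X^*)$. Splitting $g(S^{\perm}_k)\le g(X^g)+g(v)\le 2\max\{g(X^g),g(v)\}$ and observing that (after discarding elements infeasible as singletons) $g(v)\le\max_{i:f(i)\le b} g(i)$ delivers the factor $\tfrac{1}{2}(1-1/e)$.

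Third, for Eqn.~\eqref{eq:9}, I would apply Sviridenko's partial-enumeration device: for each triple $(i,j,k)\in V^3$ with $f(\{i,j,k\})\le b$, seed $X^0=\{i,j,k\}$ and then run the same $h^{\perm}$-surrogate step thereafter. For the triple matching the three largest-valued elements of $X^*$, the submodularity-plus-ratio argument no longer needs the factor of $1/2$ absorbed by the rejected element, and produces $g(X_{ijk})\ge(1-1/e)\,g(X^*)$; taking the max over triples yields the claim (the statement's $f$ on the left of Eqn.~\eqref{eq:9} is a typo for $g$, since $g$ is the objective being maximized). The only genuine obstacle is notational, namely identifying the subgradient $h^{\perm}$ constructed inside Algorithm~\ref{alg:framework} with the cost-benefit ratio rule on $(f,g)$; once that identification is made, both inequalities reduce directly to the now-classical analyses of greedy for the submodular-cost knapsack.
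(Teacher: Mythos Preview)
Your proposal is correct and mirrors exactly what the paper does. The paper does not give a self-contained proof of this lemma; it states the result with a citation to~\cite{rkiyersemiframework2013} and, in the surrounding text, makes precisely the identification you make: with $\hat f=f$ and $\hat g=h^{\perm}$ the surrogate problem is a linear knapsack, the greedy ordering on this knapsack coincides with the cost-benefit greedy of~\cite{sviridenko2004note,linbudget,krause05note} for SK, and the two guarantees are then inherited from those references (your observation that the $f$ on the left of Eqn.~\eqref{eq:9} should read $g$ is also correct).
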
}
{
The following is an informal description of the result described formally in~\cite{nipsextendedvsubcons}.
\begin{lemma}
Choosing the surrogate function $\hat{f}$ as $f$ and $\hat{g}$ as $h^{\perm}$ (with $\perm$ defined in eqn~\eqref{greedyknapsack}) in Algorithm~\ref{alg:framework} with appropriate initialization obtains a guarantee of $1 - 1/e$ for SK.
\end{lemma}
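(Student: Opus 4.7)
The plan is to show that, with the stated choice of surrogates, Algorithm~\ref{alg:framework} is operationally equivalent to Sviridenko's greedy algorithm for the submodular knapsack problem, after which the classical $1-1/e$ guarantee applies. The phrase ``appropriate initialization'' I would interpret as partial enumeration over all $\binom{n}{3}$ triples: for each triple $(i,j,k)$, we fix these as the first three positions of the permutation, greedily extend using the rule in Eqn.~\eqref{greedyknapsack}, run one pass of Algorithm~\ref{alg:framework}, and finally return the best candidate.

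First, I would verify that a single pass of Algorithm~\ref{alg:framework} with $\hat{f}=f$ and $\hat{g}=h^{\perm}$ returns a prefix $S^{\perm}_k$ of the greedy chain, where $k$ is the largest index with $f(S^{\perm}_k)\leq b$. The key structural observation is that $h^{\perm}$ is tight on the chain, i.e.\ $h^{\perm}(S^{\perm}_i) = g(S^{\perm}_i)$, and the entries $h^{\perm}(\perm(i)) = g(\perm(i)\mid S^{\perm}_{i-1})$ are precisely the marginal gains along which the ratio $g(j\mid S^{\perm}_{i-1})/f(j)$ was maximized in defining $\perm$. Hence the permutation $\perm$ also orders the elements in non-increasing ratio $h^{\perm}(\perm(i))/f(\perm(i))$, which is exactly the greedy ordering used to attack the modular-versus-modular knapsack surrogate $\max\{h^{\perm}(X)\colon f(X)\leq b\}$. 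Thus the surrogate optimizer traces the chain, and since $h^{\perm}(S^{\perm}_k) = g(S^{\perm}_k)$, the surrogate and true objective values agree at the chosen solution.

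Second, having established this equivalence, I would invoke Sviridenko's analysis~\cite{sviridenko2004note}. Running the above procedure over all $\binom{n}{3}$ initial triples reproduces the partial-enumeration greedy of~\cite{sviridenko2004note}, whose best candidate $\hat{X}$ satisfies $g(\hat{X})\geq (1-e^{-1})\, g(X^{*})$, where $X^{*}$ is an optimal solution to SK. Together with the equivalence from the first step, this yields the claimed $1-1/e$ guarantee.

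The main obstacle, which I would treat carefully, is the equivalence step: one must argue that the surrogate problem (a 0/1 knapsack instance, which is itself NP-hard) need not be solved exactly for the argument to work, because the specific prefix-of-chain solution $S^{\perm}_k$ is already a feasible solution whose surrogate value matches the true value, and Sviridenko's analysis bounds $g(S^{\perm}_k)$ against $g(X^{*})$ directly via the greedy density rule that defines $\perm$. Once this identification is made, no new analysis is required beyond citing~\cite{sviridenko2004note}, and the simpler one-shot variant (without the triple enumeration) yields the weaker $1-e^{-1/2}$ or $\tfrac{1}{2}(1-e^{-1})$ bounds~\cite{linbudget,krause05note} by the same argument with a lighter initialization.
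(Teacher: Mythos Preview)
Your proposal is correct and follows essentially the same approach as the paper: interpret ``appropriate initialization'' as partial enumeration over triples, identify the surrogate-based iteration with Sviridenko's greedy for SK via the density-ordered permutation, and then invoke the $1-1/e$ bound of~\cite{sviridenko2004note}. The paper itself does not give a self-contained argument here---it simply cites~\cite{rkiyersemiframework2013} and~\cite{sviridenko2004note} for the equivalence and the bound---so your write-up in fact supplies more detail (the tightness of $h^{\perm}$ on the chain, the observation that the surrogate knapsack need not be solved exactly, and the recovery of the $1-e^{-1/2}$ bound without enumeration) than the paper's own treatment, while remaining faithful to its intended logic.
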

}

\textbf{Greedy (Gr): } 
A similar greedy algorithm can provide approximation guarantees for
the general SCSK problem, with submodular $f$ and $g$. Unlike the
knapsack case in~\eqref{greedyknapsack}, however, at iteration $i$ 
we choose an element $j \notin S_{i-1} :
f(S^{\perm}_{i-1} \cup \{j\}) \leq b$ which maximizes $g(j |
S_{i-1})$. In terms of Algorithm~\ref{alg:framework}, this is
analogous to choosing a permutation, $\perm$ such that:
\begin{align}
\perm(i) \in \argmax \{g(j | S^{\perm}_{i-1}) | {j \notin S^{\perm}_{i-1}, f(S^{\perm}_{i-1} \cup \{j\}) \leq b}\}. 
\end{align}
\begin{theorem}\label{GrSCSK}
The greedy algorithm for SCSK obtains an approx. factor of $\frac{1}{\kappa_g} (1 - (\frac{K_f - \kappa_g}{K_f})^{k_f}) \geq \frac{1}{K_f}$, where $K_f = \max\{|X|: f(X) \leq b\}$ and $k_f = \min\{|X|: f(X) \leq b \ \& \ \forall j \in X, f(X \cup j) > b\}$.
\end{theorem}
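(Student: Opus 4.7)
The plan is to adapt the classical curvature-refined argument of Conforti--Cornuejols for submodular maximization under a cardinality constraint, replacing the cardinality parameter by the two knapsack-derived quantities $K_f$ (an upper bound on $|X^*|$) and $k_f$ (a lower bound on the length of the greedy chain, since greedy terminates only at a maximal feasible set). Let $X^g = S_\ell = \{s_1,\dots,s_\ell\}$ be the greedy chain with $S_i = \{s_1,\dots,s_i\}$, and let $X^*$ be an optimal SCSK solution. By construction $\ell \geq k_f$ and $|X^*| \leq K_f$. Write $t_i = g(s_i\mid S_{i-1})$ and $\rho_i = g(X^*) - g(S_i)$, so that $g(X^g) = \sum_{i=1}^\ell t_i$ and $\rho_0 = g(X^*)$.

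The core step is to establish a one-step recursion of the form $\rho_{i+1} \leq \bigl(\tfrac{K_f-\kappa_g}{K_f}\bigr)\rho_i + e_i$ for an appropriate non-positive or absorbable error term $e_i$. To derive this I would combine three ingredients: (i) monotonicity and submodularity, which give $\rho_i \leq g(X^*\cup S_i) - g(S_i) \leq \sum_{j\in X^*\setminus S_i} g(j\mid S_i)$; (ii) the definition of curvature $\kappa_g$, which yields $g(j\mid S_i) \geq (1-\kappa_g)\,g(j)$ for each $j$, so the excess above $(1-\kappa_g)\sum_{j\in X^*\setminus S_i} g(j)$ is controlled by a genuine marginal of $g$; and (iii) the greedy selection rule, which certifies $t_{i+1} \geq g(j\mid S_i)$ for every $j\notin S_i$ with $f(S_i\cup\{j\})\leq b$. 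Averaging over the at most $K_f$ elements of $X^*\setminus S_i$ and solving the resulting linear recurrence produces the geometric bound $\rho_\ell \leq \bigl(\tfrac{K_f-\kappa_g}{K_f}\bigr)^\ell g(X^*)$; since $\ell \geq k_f$, substituting into $g(X^g) = g(X^*) - \rho_\ell$ and rearranging by a factor of $1/\kappa_g$ produces exactly $g(X^g) \geq \tfrac{1}{\kappa_g}\bigl(1 - (\tfrac{K_f-\kappa_g}{K_f})^{k_f}\bigr) g(X^*)$.

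The main technical obstacle is that, unlike the pure cardinality setting, an element $j \in X^*\setminus S_i$ may be \emph{infeasible} to append to $S_i$ (because $f(S_i\cup\{j\}) > b$), so the greedy selection inequality $t_{i+1}\geq g(j\mid S_i)$ does not immediately hold for every $j\in X^*\setminus S_i$. I would address this by splitting $X^*\setminus S_i$ into its feasibly-addable and non-addable parts and arguing that, because greedy only stops when the current set is maximal (so $|X^g| \geq k_f$), the contribution of the non-addable elements can be absorbed into the $1/K_f$ slack in the averaging: at worst every element of $X^*$ looks infeasible at step~$1$, and then the curvature-free case ($\kappa_g = 1$, $k_f=1$) gives exactly the asserted fallback $1/K_f$ via $g(X^*) \leq \sum_{j\in X^*} g(j) \leq K_f \cdot \max_{j\in X^*} g(j) \leq K_f \cdot t_1$.

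Finally, to verify the stated inequality $\tfrac{1}{\kappa_g}\bigl(1 - (\tfrac{K_f-\kappa_g}{K_f})^{k_f}\bigr) \geq \tfrac{1}{K_f}$, I would note that the left-hand side is monotone decreasing in $\kappa_g$ on $[0,1]$ (by a direct derivative computation, using that the curvature-refined factor converges to the Nemhauser--Wolsey--Fisher-type quantity $1-(1-1/K_f)^{k_f}$ as $\kappa_g\to 0$) and monotone increasing in $k_f$, so the minimum over $\kappa_g\in(0,1]$ and $k_f\geq 1$ is attained at $\kappa_g=1$, $k_f=1$, giving the value $1/K_f$ and completing the proof.
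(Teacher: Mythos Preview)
The paper's own proof is a one-line citation: it observes that the constraint set $\{X:f(X)\leq b\}$ is down-monotone (because $f$ is monotone) and then invokes the Conforti--Cornu\'ejols bound for greedy maximization of a submodular function over an arbitrary independence system, with $K_f$ and $k_f$ playing the roles of the upper and lower ranks. So there is no new argument in the paper to compare against; the question is whether your direct adaptation of the cardinality-constraint recursion actually reproduces that result.

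It does not, as written. You correctly isolate the crux: for a general down-monotone constraint, an element $j\in X^*\setminus S_i$ may satisfy $f(S_i\cup\{j\})>b$, so the greedy certificate $t_{i+1}\geq g(j\mid S_i)$ is unavailable for that $j$, and the averaging step that drives the recursion $\rho_{i+1}\leq\bigl(\tfrac{K_f-\kappa_g}{K_f}\bigr)\rho_i$ breaks. Your proposed fix (``the contribution of the non-addable elements can be absorbed into the $1/K_f$ slack in the averaging'') is not a proof: you give no mechanism for bounding how many elements of $X^*$ are blocked at each step, and the fallback analysis at $\kappa_g=1$, $k_f=1$ only recovers the weak $1/K_f$ bound, not the curvature-refined factor for general $k_f$. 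The Conforti--Cornu\'ejols argument for independence systems handles exactly this point via an exchange/charging scheme tied to the rank quotient $k_f/K_f$; that ingredient is what your proposal is missing, and it is precisely what the paper is outsourcing by citation.

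A smaller issue: in your last paragraph the endpoint behavior is reversed. As $\kappa_g\to 0$ the factor tends to $k_f/K_f$ (by first-order expansion of $(1-\kappa_g/K_f)^{k_f}$), while at $\kappa_g=1$ it equals $1-(1-1/K_f)^{k_f}$. The minimum over $\kappa_g\in(0,1]$ and $k_f\geq 1$ is still $1/K_f$ (attained at $k_f=1$, where the expression is identically $1/K_f$ for every $\kappa_g$), so the final inequality is salvageable, but the monotonicity claim as you stated it is wrong.
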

\arxiv{
\begin{proof}
The proof of this result follows directly from~\cite{conforti1984submodular, rkiyersemiframework2013}. In particular, it holds for any down monotone constraint. It is easy to see that the constraint $\{f(X) \leq b\}$ is down-monotone when $f$ is a monotone submodular function.
\end{proof}
} 
In the worst case, $k_f = 1$ and $K_f = n$, in which case the
guarantee is $1/n$. The bound above follows from a simple observation
that the constraint $\{f(X) \leq b\}$ is down-monotone for a monotone
function $f$. However, in this variant, we do not use any specific
information about $f$. In particular it holds for maximizing a
submodular function $g$ over any down monotone
constraint~\cite{conforti1984submodular}. Hence it is
conceivable that an algorithm that uses both $f$ and $g$ to choose the
next element could provide better bounds. We do not, however, currently have
the analysis for this.\looseness-1

\textbf{Iterated Submodular Cost Knapsack (ISK): } Here, we
choose $\hat{f_t}(X)$ as a modular upper bound of $f$, tight at $X^t$.
Let $\hat{g_t} = g$. Then at every iteration, we solve\arxivalt{:
\begin{align}
\max\{ g(X) | m^f_{X^t}(X) \leq b\},
\end{align}}{ $\max\{ g(X) | m^f_{X^t}(X) \leq b\}$,}
which is a submodular maximization problem subject to a knapsack
constraint (SK). As mentioned above, greedy can solve this nearly optimally.  We start with $X^0 = \emptyset$, choose $\hat{f_0}(X) =
\sum_{j \in X} f(j)$ and then iteratively continue this process until
convergence (note that this is an ascent algorithm). We
have the following theoretical guarantee:
\begin{theorem}\label{ISKapprox}
  Algorithm ISK obtains a set $X^t$ such that $g(X^t) \geq (1 - e^{-1})
  g(\tilde{X})$, where $\tilde{X}$ is the optimal solution of $\max\left\{g(X) \ | \ f(X) \leq \frac{b (1 + (K_f - 1)(1 -
    \curvf{f})}{K_f}\right\}$ and where $K_f = \max\{|X|: f(X) \leq b\}$.\looseness-1
\end{theorem}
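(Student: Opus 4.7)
The plan is to combine two ingredients already established in the paper: (i) the $(1-1/e)$ guarantee of the greedy algorithm for each surrogate SK subproblem, and (ii) the curvature-dependent multiplicative relationship between the modular upper bound $m^f_{X^t}$ and $f$ itself. The factor $\tfrac{1 + (K_f - 1)(1-\curvf{f})}{K_f}$ on the constraint in the relaxed problem defining $\tilde X$ is precisely what is needed to ensure $\tilde X$ is feasible in the surrogate subproblem, so that the greedy approximation carries over cleanly.

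First I would record the key inequality satisfied by the supergradient-based modular upper bounds $m^f_{X^t}$: for every $X \subseteq V$,
\[
f(X) \;\leq\; m^f_{X^t}(X) \;\leq\; \frac{|X|}{1 + (|X|-1)(1-\curvf{f})}\, f(X),
\]
where the right inequality is the standard curvature bound (used already to derive Theorem~\ref{MIguar}) and the left inequality is by construction. In particular the feasible set of the surrogate problem is contained in $\{X : f(X) \leq b\}$, so any surrogate-feasible set has cardinality at most $K_f$. Second I would observe that $\phi(x) := \tfrac{x}{1 + (x-1)(1-\curvf{f})}$ is non-decreasing in $x$ on $[1,\infty)$ (one checks $\phi'(x) = \curvf{f}/[1+(x-1)(1-\curvf{f})]^2 \geq 0$), so $\phi(|X|) \leq \phi(K_f)$ whenever $|X| \leq K_f$.

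The central step is to show that $\tilde X$, the optimizer of the relaxed problem, is feasible in the surrogate SK subproblem at any iteration $t$. Since $f(\tilde X) \leq b\cdot\tfrac{1+(K_f-1)(1-\curvf{f})}{K_f} \leq b$, we have $|\tilde X| \leq K_f$; combining this with the curvature bound and monotonicity of $\phi$,
\[
m^f_{X^t}(\tilde X) \;\leq\; \phi(|\tilde X|)\, f(\tilde X) \;\leq\; \phi(K_f)\, f(\tilde X) \;\leq\; \frac{K_f}{1+(K_f-1)(1-\curvf{f})} \cdot \frac{b\,[1+(K_f-1)(1-\curvf{f})]}{K_f} \;=\; b.
\]
Hence $\tilde X$ is feasible for $\max\{g(X) : m^f_{X^t}(X) \leq b\}$. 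Applying the $(1-1/e)$ guarantee for SK (via the greedy-with-partial-enumeration algorithm of~\cite{sviridenko2004note}) to this surrogate yields $g(X^{t+1}) \geq (1-1/e)\, g(X^*_{t+1}) \geq (1-1/e)\, g(\tilde X)$, where $X^*_{t+1}$ is the surrogate optimum. Because ISK is a monotone ascent procedure on $g$ (established in the same way as the MM analyses of \cite{rkiyeruai2012,rkiyersemiframework2013}), the same bound holds for every subsequent $X^t$, proving the claim.

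The only subtle point — and the step I expect to require the most care — is verifying that the surrogate feasibility bound applies \emph{uniformly} over all iterates $X^t$ of ISK; this is what justifies the same relaxation factor independent of $t$ and is what makes the degree of relaxation in $\tilde X$ tight. The rest reduces to chaining the two known approximation inequalities in the right order.
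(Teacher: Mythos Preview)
Your overall strategy is the same as the paper's, but there is a genuine gap in your ``central step.'' The inequality
\[
m^f_{X^t}(X) \;\leq\; \frac{|X|}{1 + (|X|-1)(1-\curvf{f})}\, f(X)
\]
is \emph{not} known to hold for arbitrary $X^t$; the curvature bound you cite from the proof of Theorem~\ref{MIguar} is stated only for $m^f_{\emptyset}(X) = \sum_{j \in X} f(j)$. In fact it can fail for other $X^t$: take $f(X) = \min\{|X|,2\}$ (so $\curvf{f}=1$), $X^t = V$ with $|V|\geq 3$, and $Y = \{j\}$ a singleton. Then $m^f_{V,2}(Y) = f(V) - \sum_{i \in V \setminus Y} f(i \mid V\setminus i) = 2$, whereas your bound would require $m^f_{V,2}(Y) \leq 1\cdot f(Y) = 1$. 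More generally, for $Y$ disjoint from $X^t$ one has $m^f_{X^t,1}(Y) = m^f_{\emptyset}(Y) + \bigl[f(X^t) - \sum_{j \in X^t} f(j \mid X^t\setminus j)\bigr]$, and the bracketed term is nonnegative and can be strictly positive, so the upper bound on $m^f_{\emptyset}$ does not transfer.

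The fix is exactly what the paper does: argue only at the \emph{first} iteration, where $X^0 = \emptyset$ and the surrogate is $m^f_{\emptyset}(X) = \sum_{j\in X} f(j)$, for which your feasibility calculation goes through verbatim and yields $g(X^1) \geq (1-1/e)\,g(\tilde X)$. Then the ascent property you already invoke gives $g(X^t) \geq g(X^1)$ for all subsequent $t$. So the ``uniform over all $X^t$'' claim you flag as the delicate step is both unnecessary and false; dropping it and specializing to $t=0$ completes the proof.
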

\arxiv{\begin{proof}
We are given that $\tilde{X}$ is the optimal solution to the problem:
\begin{align}
\max\left\{g(X) \ | \ f(X) \leq \frac{b (1 + (K_f - 1)(1 - \curvf{f})}{K_f}\right\}
\end{align}
$\tilde{X}$ is also a feasible solution to the problem:
\begin{align}\label{subprobKI}
\max\left\{g(X) \ | \ \sum_{j \in X} f(j) \leq b \right\}
\end{align}
The reason for this is that:
\begin{align}
\sum_{j \in X} f(j) \leq \frac{K_f}{1 + (K_f - 1)(1 - \curvf{f})} f(X) \leq  \frac{K_f}{1 + (K_f - 1)(1 - \curvf{f})}\frac{b (1 + (K_f - 1)(1 - \curvf{f}))}{K_f} \leq b \nonumber
\end{align}
\JTR{fix bugs in the above equation, both the missing parentheses and also
  the divide/multiply by $b$.}\RTJ{done}  Now, at the first iteration we are
guaranteed to find a set $X^1$ such that $g(X^1) \geq (1 -
1/e)g(\tilde{X})$. The further iterations will only improve the
objective since this is a ascent algorithm.
\end{proof}}
It is worth pointing out that the above bound holds even after the
first iteration of the algorithm. It is interesting to note the similarity
between this approach and \arxiv{the iterated submodular set cover}
ISSC. \notarxiv{Notice that the guarantee above is not a standard bi-criterion approximation. We show in the extended version~\cite{nipsextendedvsubcons} that with a simple transformation, we can obtain a bicriterion guarantee.}\arxiv{Notice that the bound above is a form of a bicriterion approximation factor of type-II. In particular, we obtain a feasible solution at the end of the algorithm. We can obtain a type-I bicriterion approximation bound, by running this for a larger budget constraint. In particular, we run the above algorithm with a budget constraint of  $\frac{b K_f}{1 + (K_f - 1)(1 - \curvf{f})}$ instead of $b$. The following guarantee then follows.

\begin{lemma}
The ISK algorithm of type-I is guaranteed to obtain a set $X$ which has a bicriterion approximation factor of $[1 - e^{-1}, \frac{K_f}{1 + (K_f - 1)(1 - \curvf{f})}]$, where $K_f = \max\{|X|: f(X) \leq b\}$.
\end{lemma}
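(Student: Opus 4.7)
The plan is to derive the type-I bicriterion by re-running ISK on a suitably inflated budget and reusing the first-iteration argument from the proof of Theorem~\ref{ISKapprox}. I will set $b' := b\,K_f/(1+(K_f-1)(1-\curvf{f}))$ and run ISK with $b'$ in place of $b$. Let $X^*$ denote an optimum of the original SCSK instance at budget $b$, so that $f(X^*)\leq b$ and $|X^*|\leq K_f$.

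The first step is to verify that $X^*$ is feasible for the initial surrogate subproblem $\max\{g(X): \sum_{j\in X} f(j)\leq b'\}$ used by ISK (recall that the paper fixes $\hat f_0(X)=\sum_{j\in X}f(j)$). From the curvature-based modular upper bound of~\cite{curvaturemin}, $\sum_{j\in X} f(j) \leq \tfrac{|X|}{1+(|X|-1)(1-\curvf{f})} f(X)$ for every $X$, and the map $t\mapsto t/(1+(t-1)(1-\curvf{f}))$ is non-decreasing in $t$ (when $\curvf{f}<1$). Combining these with $|X^*|\leq K_f$ and $f(X^*)\leq b$ yields $\sum_{j\in X^*} f(j) \leq \tfrac{K_f}{1+(K_f-1)(1-\curvf{f})}\, b = b'$, establishing feasibility of $X^*$.

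Next, I will invoke the greedy $(1-e^{-1})$-guarantee for SK on the first iteration to obtain $X^1$ with $g(X^1)\geq(1-e^{-1})g(\tilde X_1)$, where $\tilde X_1$ is the optimum of that subproblem. Because $X^*$ is feasible there, $g(\tilde X_1)\geq g(X^*)$, so $g(X^1)\geq(1-e^{-1})g(X^*)$. Submodularity immediately gives $f(X^1)\leq\sum_{j\in X^1}f(j)\leq b'$. Since ISK is an ascent method in $g$ and each subsequent surrogate $m^f_{X^t}$ is a valid upper bound on $f$, the final iterate $\hat X$ inherits $g(\hat X)\geq(1-e^{-1})g(X^*)$ and $f(\hat X)\leq b'=\sigma b$ with $\sigma = K_f/(1+(K_f-1)(1-\curvf{f}))$, which is precisely the stated bicriterion.

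There is no serious obstacle: the statement is effectively a corollary of the first-iteration analysis in Theorem~\ref{ISKapprox} after rescaling the budget, and a naive black-box application of Theorem~\ref{ISKapprox} at budget $b'$ would not suffice because the quantity $K_f'=\max\{|X|:f(X)\leq b'\}$ may exceed $K_f$ and push the comparison budget strictly below $b$. The cleanest route is therefore to re-run the first-iteration argument directly, as above. The only detail worth checking explicitly is the monotonicity of $t\mapsto t/(1+(t-1)(1-\curvf{f}))$ (which is what allows one to replace the pointwise $|X^*|$ by the global $K_f$), together with the easy degenerate cases $\curvf{f}=1$ and $K_f=1$, where the scaling factor $\sigma$ is trivial.
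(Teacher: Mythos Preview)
Your proposal is correct and follows essentially the same route as the paper: run ISK with the inflated budget $b' = b\,K_f/(1+(K_f-1)(1-\curvf{f}))$ and appeal to the first-iteration analysis of Theorem~\ref{ISKapprox}. The paper's own proof is a two-line remark (``follows from the Lemma above and the transformation from type-II to type-I''), so your version is really a careful expansion of that remark; in particular, your observation that a literal black-box invocation of Theorem~\ref{ISKapprox} at budget $b'$ would introduce $K_{f'}\geq K_f$ and thus fail to compare against the correct budget $b$ is a genuine clarification that the paper glosses over.
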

\begin{proof}
This result directly follows from the Lemma above, and the transformation from a type-II approximation algorithm to a type-I one.
\end{proof}
}

\textbf{Ellipsoidal Approximation based Submodular Cost Knapsack (EASK): }
Choosing the Ellipsoidal Approximation $f^{ea}$ of $f$ as a surrogate function, we obtain a simpler problem:
\begin{align}\label{EASKeq}
\max\left\{ g(X) \ \bigg\vert \ \curvf{f} \sqrt{w^{f^{\kappa}}(X)} + (1 - \curvf{f}) \sum_{j \in X} f(j) \leq b\right\}. 
\end{align}
In order to solve this problem, we look at its dual
problem (i.e., Eqn.~\eqref{EASSCeq}) and use Algorithm~\ref{alg:alg1} to convert
the guarantees.  \arxivalt{Recall that
Eqn.~\eqref{EASSCeq} itself admits two variants of the ellipsoidal approximation,
which we had referred to as the primal EASSC and the dual EASSC. Hence, we call the combined algorithms, primal and dual EASK, which admit the following guarantees:}{
We call this procedure EASK. We then obtain guarantees very similar to Theorem~\ref{EAguar}.}
\begin{lemma}
  \arxiv{The primal} EASK obtains a guarantee of $\left[1 +
  \epsilon, O(\frac{\sqrt{n}\log n H_g}{ 1 + (\sqrt{n}\log n - 1)(1 -
    \curvf{f})})\right]$. \arxiv{Similarly, the dual EASK obtains a
  guarantee of $\left[(1 + \epsilon)(1 - 1/e), O(\frac{\sqrt{n}\log n}{ 1 +
    (\sqrt{n}\log n - 1)(1 - \curvf{f})})\right]$.}
\end{lemma}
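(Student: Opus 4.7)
The plan is to treat EASK as a black-box reduction from SCSK to SCSC via Theorem~\ref{thm1} (Algorithm~\ref{alg:alg1}), using EASSC as the SCSC subroutine. No new combinatorial analysis is needed; the work is in feeding the right approximation parameters through the transformation.

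First I would recall the two bounds established in Theorem~\ref{EAguar}. The primal EASSC is a (non-bicriterion) $[\sigma_1, 1]$-approximation for SCSC with
\[
\sigma_1 \;=\; O\!\left(\frac{\sqrt{n}\log n \cdot H_g}{1 + (\sqrt{n}\log n - 1)(1 - \curvf{f})}\right),
\]
while the dual EASSC is a $[\sigma_2, 1-1/e]$ bi-criterion approximation with
\[
\sigma_2 \;=\; O\!\left(\frac{\sqrt{n}\log n}{1 + (\sqrt{n}\log n - 1)(1 - \curvf{f})}\right).
\]

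Next, plug each of these into Algorithm~\ref{alg:alg1} as the SCSC oracle and apply Theorem~\ref{thm1}, which turns a $[\sigma,\rho]$ algorithm for SCSC into a $[(1-\epsilon)\rho, \sigma]$ algorithm for SCSK. Substituting $(\sigma_1, 1)$ yields the primal EASK guarantee $[(1-\epsilon), \sigma_1]$; substituting $(\sigma_2, 1-1/e)$ yields the dual EASK guarantee $[(1-\epsilon)(1-1/e), \sigma_2]$. These match the claimed bounds after the routine reparameterization $\epsilon \mapsto \epsilon/(1-\epsilon)$ (so that $(1-\epsilon)^{-1}$ is absorbed into the statement's $(1+\epsilon)$ factor).

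The only potential concern, and not a real obstacle, is computational efficiency rather than approximation quality. Algorithm~\ref{alg:alg1} invokes the SCSC solver $O(\log[g(V)/\min_j g(j)]/\epsilon)$ times on instances that differ only in the covering parameter $c$. However, the ellipsoidal surrogate $f^{\text{ea}}(X) = \curvf{f}\sqrt{w^{f^\kappa}(X)} + (1-\curvf{f})\sum_{j \in X} f(j)$ depends on $f$ alone (through $w^{f^\kappa}$), so the expensive ellipsoidal approximation step~\cite{goemans2009approximating} is performed once, outside the loop. Each inner call reduces to an $(1+\epsilon)$-approximate minimization of $f^{\text{ea}}$ under a shifted covering constraint $g(X) \geq c$, which is exactly Eqn.~\eqref{EASSCeq} and is handled by the FPTAS of~\cite{nikolova2010approximation}. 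The guarantees therefore transfer cleanly through the reduction, and the claimed bi-criterion bounds follow.
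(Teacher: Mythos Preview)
Your proposal is correct and matches the paper's own proof essentially verbatim: the paper simply states that the lemma ``directly follows from Theorem~\ref{thm1} and Theorem~\ref{EAguar},'' i.e., feed the primal/dual EASSC guarantees through the SCSC$\to$SCSK conversion of Algorithm~\ref{alg:alg1}. Your remark that the ellipsoidal approximation $w^{f^\kappa}$ need only be computed once across all inner calls is also exactly the efficiency observation the paper makes immediately after its proof.
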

\arxiv{\begin{proof}This Lemma directly follows from
  Theorem~\ref{thm1} and
  Theorem~\ref{EAguar}. \end{proof}
These factors are akin to
those of Theorem~\ref{EAguar}, except for the additional factor of $1 +
\epsilon$. Unlike the greedy algorithm, ISSC and ISK, note that both EASK and EASSC are
enormously costly and complicated algorithms. Hence, it also seems at first
thought that EASK would need to run multiple versions of EASSC at each conversion round of Algorithm~\ref{alg:alg1}. Fortunately, however, we need to compute the
Ellipsoidal Approximation just once and the algorithm can reuse it for
different values of $c$. Furthermore, since construction
of the approximation is often the bottleneck, this scheme is likely to be as costly
as EASSC in practice.\looseness-1
}

In the case when the submodular function has a curvature $\curvf{f} =
1$, we can actually provide a simpler algorithm without needing to
use the conversion algorithm (Algorithm~\ref{alg:alg1}). In this case, we
can directly choose the ellipsoidal approximation of $f$ as
$\sqrt{w^f(X)}$ and solve the surrogate problem:\arxivalt{
\begin{align} 
\label{EAKeqn}
\max\{g(X) : w^f(X) \leq b^2\}. 
\end{align}}{ $\max\{g(X) : w^f(X) \leq b^2\}.$}
This surrogate problem is a submodular cost knapsack problem, which we can
solve using the greedy algorithm. We call this algorithm EASKc. This guarantee is tight up to $\log$ factors if $\curvf{f} = 1$.
\begin{corollary}
Algorithm EASKc obtains a bi-criterion guarantee of $[1 - e^{-1}, O(\sqrt{n}\log n)]$.
\end{corollary}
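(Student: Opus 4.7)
The plan is to use the ellipsoidal approximation guarantee from Lemma~\ref{cor:learn} (specialized to $\curvf{f} = 1$) together with the standard $(1 - e^{-1})$ guarantee of the greedy algorithm for SK. Recall that when $\curvf{f} = 1$, the ellipsoidal approximation simplifies to $f^{\text{ea}}(X) = \sqrt{w^f(X)}$, and satisfies
\begin{align*}
\sqrt{w^f(X)} \leq f(X) \leq O(\sqrt{n} \log n) \sqrt{w^f(X)}, \quad \forall X \subseteq V.
\end{align*}
EASKc solves $\max\{g(X) : w^f(X) \leq b^2\}$ using the greedy algorithm, producing some set $\hat{X}$.

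First I would establish approximate optimality. Let $X^*$ be the optimal solution of the original SCSK problem, so $f(X^*) \leq b$. By the lower bound in the ellipsoidal approximation, $\sqrt{w^f(X^*)} \leq f(X^*) \leq b$, hence $w^f(X^*) \leq b^2$. Thus $X^*$ is feasible for the surrogate problem, so if $X^{**}$ denotes the optimal solution of the surrogate, $g(X^{**}) \geq g(X^*)$. Since the surrogate is an SK instance (modular knapsack constraint over a monotone submodular objective), greedy yields $g(\hat{X}) \geq (1 - e^{-1}) g(X^{**}) \geq (1 - e^{-1}) g(X^*)$, which is the claimed $\rho$ factor.

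Next I would establish approximate feasibility. By construction $\hat{X}$ satisfies $w^f(\hat{X}) \leq b^2$, i.e., $\sqrt{w^f(\hat{X})} \leq b$. Applying the upper bound in the ellipsoidal approximation,
\begin{align*}
f(\hat{X}) \leq O(\sqrt{n} \log n) \sqrt{w^f(\hat{X})} \leq O(\sqrt{n} \log n) \cdot b,
\end{align*}
which gives the $\sigma = O(\sqrt{n} \log n)$ factor. Combining both bounds yields the claimed $[1 - e^{-1}, O(\sqrt{n} \log n)]$ bi-criterion guarantee.

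There is no real obstacle here beyond correctly invoking the two ingredients; the only subtlety is that because we cannot solve $\max\{g(X) : w^f(X) \leq b^2\}$ exactly, we must rely on the $1 - e^{-1}$ greedy bound for SK (justifying why the $\rho$ factor is $1 - e^{-1}$ rather than $1$), and we should note that the feasibility slack arises entirely from the ellipsoidal sandwich inequality, which is why the $\sigma$ factor matches the approximation factor of $f^{\text{ea}}$ relative to $f$ at curvature $1$.
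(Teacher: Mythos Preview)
Your proof is correct and follows essentially the same approach as the paper: both first show $X^*$ is feasible for the surrogate (via $\sqrt{w^f(X^*)} \leq f(X^*) \leq b$) to obtain the $1-e^{-1}$ optimality factor from the SK greedy guarantee, and then use the upper ellipsoidal bound $f(\hat{X}) \leq O(\sqrt{n}\log n)\sqrt{w^f(\hat{X})} \leq O(\sqrt{n}\log n)\,b$ for the feasibility factor. Your version is slightly more explicit in naming the surrogate optimum $X^{**}$, but the argument is otherwise identical.
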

\arxiv{\begin{proof} 
Let $\hat{X}$ be the approximate optimizer of
    Eqn.~\eqref{EAKeqn}.  First we show that $g(\hat{X}) \geq (1 -
    1/e) g(X^*)$ where $X^*$ is the optimizer of problem 2. Notice
    that $f(X^*) \leq b$ since it feasible in problem 2. Also,
    $\sqrt{w^f(X^*)} \leq f(X^*) \leq b$ and hence $X^*$ is feasible
    in Eqn.~\eqref{EAKeqn}. Hence it holds that $g(\hat{X}) \geq (1 -
    1/e) g(X^*)$.

We now show that $f(\hat{X}) \leq b\sqrt{n}\log n$. Notice that,
\begin{align}
f(\hat{X}) \leq \sqrt{n} \log n \sqrt{w^f(\hat{X})} \leq b \sqrt{n} \log n
\end{align}
\end{proof}}

\subsection{Extensions beyond SCSC and SCSK}\label{extentions}
\arxivalt{SCSC is in fact more general and can be extended
  to more flexible and complicated constraints which can arise naturally in many
  applications~\cite{krause08robust, guillory2011simultaneous}. Notice
  first that
\begin{align}
\{g(X) \geq \alpha\} \Leftrightarrow \{g^{\prime}(X) = g^{\prime}(V)\}
\end{align}
 where $g^{\prime}(X) = \min\{g(X), \alpha\}$. We can also have ``and'' constraints as $g_1(X) = g_1(V)$ and $g_2(X) = g_2(V)$. These have a simple equivalence:
\begin{align}
\{g_1(X) = g_1(V) \wedge g_2(X) = g_2(V)\} \Leftrightarrow 	\{g(X) = g(V)\}
\end{align}
 when $g(X) = g_1(X) + g_2(X)$~\cite{krause08robust}. Moreover, we can also handle $k$ `and' constraints, by defining $g(X) = \sum_{i = 1}^k g_i(X)$.

Similarly we can have ``or'' constraints, i.e., $g_1(X) = g_1(V)$ or $g_2(X) = g_2(V)$. These also
have a nice equivalence: 
\begin{align}
\{g_1(X) = g_1(V) \vee g_2(X) = g_2(V)\} \Leftrightarrow \{g(X) = g(V)\} 
\end{align}
by defining $g(X) = g_1(X)g_2(V) + g_2(X)g_1(V) -
g_1(X)g_2(X)$~\cite{guillory2011simultaneous}. We can also extend these recursively to multiple `or' constraints. Hence our algorithms
can directly solve all these variants of SCSC.

SCSK can also be extended to handle more complicated forms
of functions $g$. In particular, consider the function
\begin{align}
g(X)= \min\{g_1(X), g_2(X), \dots, g_k(X)\}
\end{align} 
where the functions $g_1, g_2,
\dots, g_k$ are submodular. Although $g(X)$ in this case is not
submodular, this scenario occurs naturally in many applications,
particularly sensor placement~\cite{krause08robust}. The problem in~\cite{krause08robust} is in fact a special case of Problem 2, using a modular function $f$. Often, however,
the budget functions involve a cooperative cost, in which case $f$ is submodular.  Using Algorithm~\ref{alg:alg1},
however, we can easily solve this by iteratively solving the dual
problem. Notice that the dual problem is in the form of Problem 1 with a non-submodular
constraint $g(X) \geq c$. It is easy to see that this is equivalent to
the constraint $g_i(X) \geq c, \forall i = 1, 2, \dots, k$, which can be solved thanks to the techniques above. \looseness-1

We can also handle multiple constraints in SCSK. In particular, consider multiple `and' constraints -- $\{f_i(X) \leq b_i, i = 1, 2, \cdots, k\}$, for monotone submodular functions $f_i$ and $g$. A first observation is that the greedy algorithm can almost directly extended to these cases, since we do not use any specific property of the constraints, while providing the approximation guarantees. Hence Theorem~\ref{GrSCSK} can directly be extended to this case. We can also provide bi-criterion approximation guarantees with `and' constraints. The approximate feasibility for a $[\rho, \sigma]$ bi-criterion approximation in this setting would be to have a set $X$ such that $f_i(X) \leq \sigma b_i$. Algorithm ISK can easily then be used in this scenario, and at every iteration we would solve a monotone submodular maximization problem subject to multiple linear (or knapsack constraints). Surprisingly this problem also has a constant factor ($1 - 1/e$) approximation guarantee~\cite{kulik2009maximizing}. Hence we can retain the same approximation guarantees as in Theorem~\ref{ISKapprox}. We can also use algorithm EASKc, and obtain a curvature-independent approximation bound for this problem. The reason for this is the Ellipsoidal Approximation is of the form $\sqrt{w^{f_i}(X)}$, for each $i$ and squaring it will lead to knapsack constraints. If we add the curvature terms (i.e., try to implement EASK in this setting), we obtain a much more complicated class of constraints, which we do not currently know how to handle.

We can also extend SCSC and SCSK to non-monotone submodular functions. In particular, recall that the submodular knapsack has constant factor approximation guarantees even when $g$ is non-monotone submodular~\cite{fiege2011submodmax}. Hence, we can obtain bi-criterion approximation guarantees for the Submodular Set Cover (SSC) problem, by solving the Submodular Knapsack (SK) problem multiple times. We can similarly do SCSC and SCSK when $f$ is monotone submodular and $g$ is non-monotone submodular, by extending ISSC, EASSC, ISK and EASK (note that in all these cases, we need to solve a submodular set cover or submodular knapsack problem with a non-monotone $g$). These algorithms, however, do not extend if $f$ is non-monotone, and we do not currently know how to implement these.
}{SCSC and SCSK can in fact be extended
  to more flexible and complicated constraints which can arise naturally in many
  applications~\cite{krause08robust, guillory2011simultaneous}. These include multiple covering and knapsack constraints -- i.e., $\min\{f(X) | g_i(X) \geq c_i, i = 1, 2, \cdots k\}$ and $\max\{g(X) | f_i(X) \leq b_i, i = 1, 2, \cdots k\}$, and robust optimization problems like $\max\{\min_i g_i(X) | f(X) \leq b\}$, where the functions $f, g, f_i$'s and $g_i$'s are submodular. We also consider SCSC and SCSK with non-monotone submodular functions. Due to lack of space, we defer these discussions to the extended version of this paper~\cite{nipsextendedvsubcons}.}
\subsection{Hardness}
\label{sec:hardness}

In this section, we provide the hardness for Problems 1 and 2. The
lower bounds serve to show that the approximation factors above are
almost tight.
\begin{theorem}
 For any $\kappa > 0$, there exists submodular functions with curvature $\kappa$ such that no polynomial time algorithm for Problems 1 and 2
  achieves a bi-criterion factor better than
  $\frac{\sigma}{\rho} = \frac{n^{1/2 - \epsilon}}{1 + (n^{1/2 -
      \epsilon} - 1)(1 - \kappa)}$ for any $\epsilon > 0$. \looseness-1
\end{theorem}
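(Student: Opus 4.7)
The plan is to establish this hardness by a reduction from the information-theoretic indistinguishability construction underlying the lower bounds of Goemans et al.\ on approximating submodular functions everywhere, augmented with a curvature-modification trick analogous to the one used to define $f^{\kappa}$ in Section~\ref{background}. Since Theorem~\ref{thm1} already shows that SCSC and SCSK are polynomially inter-reducible with essentially matching bi-criterion factors, it suffices to prove the lower bound for one of them; I would do so for SCSC and transfer to SCSK through Algorithm~\ref{alg:alg2}.

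First, I would invoke the standard Svitkina--Fleischer style construction in the form used by Goemans et al.\ and refined for curvature by Iyer--Jegelka--Bilmes. Fix a parameter $\alpha = \Theta(n^{1/2-\epsilon})$ and let $R \subseteq V$ be a uniformly random subset of size $\alpha$. Define
\begin{align*}
f_1(X) = \min\{|X|, \alpha\}, \qquad f_2(X) = \min\{\beta|X \cap R| + |X \setminus R|, \alpha\},
\end{align*}
where $\beta = \Theta(\alpha/n) = \Theta(n^{-(1/2-\epsilon)})$. Both are polymatroid (truncated rank) functions, no polynomial-time value-oracle algorithm can distinguish $f_1$ from $f_2$ with non-negligible probability, and $f_1(R)/f_2(R) = 1/\beta = \Omega(n^{1/2-\epsilon})$.

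Second, to insert the prescribed curvature $\kappa$, I would form the curvature-controlled variants
\begin{align*}
\tilde{f}_i(X) \;\triangleq\; \kappa\, f_i(X) + (1-\kappa)\sum_{j \in X} f_i(\{j\}), \qquad i \in \{1,2\}.
\end{align*}
A direct check shows each $\tilde{f}_i$ is polymatroid with total curvature exactly $\kappa$ (the modular part is tight at singletons and the submodular part has full curvature), and the gap on $R$ becomes
\begin{align*}
\frac{\tilde{f}_1(R)}{\tilde{f}_2(R)} \;=\; \frac{\alpha}{\kappa \beta \alpha + (1-\kappa)\alpha} \;=\; \frac{1}{(1-\kappa) + \kappa\beta} \;=\; \Omega\!\left(\frac{n^{1/2-\epsilon}}{1 + (n^{1/2-\epsilon}-1)(1-\kappa)}\right),
\end{align*}
which is exactly the target ratio.

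Third, I would encode this into an SCSC instance by taking the cost as $\tilde{f}_i$ and the constraint as $g(X) = \min\{|X|, \alpha\}$ with cover requirement $c = \alpha$. In the $f_2$ world, the hidden set $R$ is feasible with cost $\tilde{f}_2(R) = (\kappa\beta+1-\kappa)\alpha$, whereas in the $f_1$ world every $X$ with $g(X) \geq \rho\alpha$ has $|X| \geq \rho\alpha$ and cost at least $\rho\alpha$. Any $[\sigma,\rho]$ bi-criterion algorithm with $\sigma/\rho$ beating the stated bound would therefore separate these two oracle instances with noticeable probability, contradicting the indistinguishability lemma; the SCSK lower bound follows by applying Theorem~\ref{thm1}. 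The main obstacle is twofold: verifying that $\tilde{f}_i$ has curvature \emph{equal to} (not merely bounded by) $\kappa$ and remains submodular, and carefully controlling how the relaxation to approximate feasibility (the $\rho < 1$ slack) interacts with the truncation at $\alpha$ --- one must choose $\alpha$ and $\beta$ so that even a set of size $\rho\alpha$ drawn obliviously to $R$ cannot accidentally achieve the low-cost value, which is precisely where the $\epsilon$ loss in the exponent is absorbed.
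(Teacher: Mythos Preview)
Your overall strategy mirrors the paper's: take an indistinguishable pair of polymatroid functions in the Svitkina--Fleischer/Goemans et al.\ style, blend in a modular term to dial the curvature to $\kappa$, run the bi-criterion argument for SCSC, and then transfer to SCSK via Theorem~\ref{thm1}. That architecture is correct and is exactly what the paper does.

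However, the concrete pair $(f_1,f_2)$ you wrote down is \emph{not} indistinguishable, and this breaks the proof. With $f_2(X)=\min\{\beta|X\cap R|+|X\setminus R|,\alpha\}$ and $\beta=\Theta(n^{-(1/2+\epsilon)})<1$, a single singleton query reveals $R$: for $j\in R$ you get $f_2(\{j\})=\beta$, while $f_1(\{j\})=1$. Thus $n$ singleton evaluations identify $R$ exactly, and any algorithm can then return $R$ itself and beat the claimed ratio. The same leak propagates to your curvature step, since $\tilde f_2(\{j\})=\beta$ for $j\in R$. The paper avoids this by using the additive form
\[
f_R(X)\;=\;\kappa\min\{\beta+|X\cap\bar R|,\,|X|,\,\alpha\}\;+\;(1-\kappa)|X|,
\]
with $\alpha=x\sqrt{n}/5$, $\beta=x^2/5$, $x^2=n^{2\epsilon}$; the inner $\min$ with $|X|$ forces $f_R(\{j\})=1$ for every $j$, so singletons (and more generally any polynomial set of queries) fail to separate $f_R$ from $f(X)=\kappa\min\{|X|,\alpha\}+(1-\kappa)|X|$. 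Note also that the modular piece added for curvature is simply $(1-\kappa)|X|$, identical for both functions, so indistinguishability is preserved after the curvature blend. If you replace your $(f_1,f_2)$ by this pair (or any variant in which all singleton values coincide), the rest of your plan goes through essentially as you describe, including the bi-criterion case analysis and the transfer to SCSK.
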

\arxiv{
\begin{proof}
  We prove this result using the hardness construction
  from~\cite{goemans2009approximating, svitkina2008submodular}. The
  main idea of their proof technique is to construct two submodular
  functions $f(X)$ and $f_R(X)$ that with high probability are
  indistinguishable. Thus, also with high probability, no algorithm
  can distinguish between the two functions and the gap in their
  values provides a lower bound on the approximation. We shall see
  that this lower bound in fact matches the approximation factors up
  to log factors and hence this is the hardness of the problem.

  Define two monotone submodular functions $f(X) = \curvf{f}\min\{|X|,
  \alpha\} + (1 - \curvf{f})|X|$ and $f_R(X) = \curvf{f}\min\{\beta +
  |X \cap \bar{R}|, |X|, \alpha\} + (1 - \curvf{f}) |X|$, where $R
  \subseteq V$ is a random set of cardinality $\alpha$. Let $\alpha$
  and $\beta$ be an integer such that $\alpha = x\sqrt{n}/5$ and
  $\beta = x^2/5$ for an $x^2 = \omega(\log n)$. Both $f$ and $f_R$
  have curvature $\curvf{f}$. We also assume a very simple
  function $g(X) = |X|$. Given an arbitrary $\epsilon>0$, set $x^2 =
  n^{2\epsilon} = \omega(\log n)$. Then the ratio between
  $f_R^{\curvf{f}}(R)$ and $g^{\curvf{f}}(R)$ is
  $\frac{n^{1/2-\epsilon}}{1 + (n^{1/2-\epsilon}-1)(1 -
    \curvf{f})}$. A Chernoff bound analysis very similar
  to~\cite{svitkina2008submodular} reveals that any algorithm that
  uses a polynomial number of queries can distinguish $h^{\kappa}$ and
  $f^{\kappa}_R$ with probability only $n^{-\omega(1)}$, and therefore
  cannot reliably distinguish the functions with a polynomial number
  of queries.

  We first prove the first part of this theorem (i.e., for SCSC). In
  this case, we consider the problem
\begin{align}\label{probststproof}
\min\{h(X) | |X| \geq \alpha\}
\end{align}
with $h$ chosen as $f$ and $f_R$ respectively. It is easy to see that
$R$ is the optimal set in both cases. However the ratio between the
two is
\begin{align}
\gamma(n) = \frac{n^{1/2-\epsilon}}{1 + (n^{1/2-\epsilon}-1)(1 - \curvf{f})}
\end{align}
Now suppose there exists an algorithm which is guaranteed to obtain an
approximation factor better than $\gamma(n)$. Then by running this
algorithm, we are guaranteed to find different answers for
Eqn~\eqref{probststproof} above. This must imply that this algorithm
can distinguish between $f_R$ and $g$ which is a contradiction. Hence
no algorithm for Problem 1 can obtain an approximation guarantee
$\frac{n^{1/2 - \epsilon}}{1 + (n^{1/2 - \epsilon} - 1)(1 -
  \curvf{f})}$.

In order to extend the result to the bi-criterion case, we need to
show that no bi-criterion approximation algorithm can obtain a factor
better than $\frac{\sigma}{\rho} = \gamma(n)$. Assume there exists a
bi-criterion algorithm with a factor $[\sigma, \rho]$ such that:
\begin{align}
\frac{\sigma}{\rho} < \gamma(n) = \frac{n^{1/2-\epsilon}}{1 + (n^{1/2-\epsilon}-1)(1 - \curvf{f})}
\end{align}
Then, we are guaranteed to obtain a set $S$ in
equation~\eqref{probststproof} such that $h(S) \leq \sigma OPT$ and
$|S| \geq \rho \alpha$, where $\sigma \geq 1$ and $\rho \leq 1$. Now
we run this algorithm with $h = f$ and $h = f_R$ respectively. With $h
= f$, it is easy to see that the algorithm obtains a set $S_1$ such
that $f(S_1) \leq \sigma \alpha$ and $|S_1| \geq \rho
\alpha$. Similarly, with $h = f_R$, the algorithm finds a set $S_2$
such that $f(S_2) \leq \sigma (\kappa \beta + (1 - \kappa)\alpha)$ and
$|S_2| \geq \rho \alpha$. Since $f_R$ and $f$ are indistinguishable,
$S_1 = S_2 = S$.

We first assume that $|S| < \alpha$. Then $f(S) = |S| \geq \rho \alpha$. We then have that,
\begin{align}
f_R(S) &\leq \sigma (\kappa \beta + (1 - \kappa)\alpha \\
	&\leq \sigma n^{2\epsilon} (1 + (n^{1/2 - \epsilon} - 1)(1 - \curvf{f}) \\
	&< \rho \gamma(n) n^{2\epsilon} (1 + (n^{1/2 - \epsilon} - 1)(1 - \curvf{f}) \\
	&< \rho n^{1/2 + \epsilon} \\
	&< \rho \alpha
\end{align}
This implies that the algorithm can distinguish between $f_R$ and $f$
which is a contradiction. Now consider the second case when $|S| \geq
\alpha$. In this case, $f(S) = \alpha$. Again the chain of
inequalities above shows that $f_R(S) < \rho \alpha < \alpha$ since
$\rho \leq 1$. This again implies that the algorithm can distinguish
between $f_R$ and $f$ which is a contradiction. Hence no bi-criterion
approximation algorithm can obtain a factor better than $\gamma(n)$.

To show the second part (for SCSK), we can simply invoke Theorem~\ref{thm1} and argue that any $[\rho, \sigma]$ bi-criterion approximation algorithm for problem 2 with 
\begin{align}
\frac{\sigma}{\rho} < \frac{\gamma(n)}{1 + \epsilon}
\end{align}
 can be used in algorithm 3, to provide a $[\sigma, \rho]$ bi-criterion algorithm with $\frac{\sigma}{\rho} < \gamma(n)$. This contradicts the first part above and hence the same hardness applies to problem 2 as well.
\end{proof}
} The above result shows that EASSC and EASK meet the bounds above to
log factors. We see an interesting curvature-dependent influence
on the hardness.  \arxivalt{We also see from our approximation guarantees
  also that the curvature of $f$ plays a more influential role than
  the curvature of $g$ on the approximation quality.}{We also see this
  phenomenon in the approximation guarantees of our algorithms.} In
particular, as soon as $f$ becomes modular, the problem becomes easy,
even when $g$ is submodular. This is not surprising since the
submodular set cover problem and the submodular cost knapsack problem
both have constant factor guarantees.\looseness-1

\arxiv{\begin{figure}[t]
  \centering
\hspace{-10pt}
\includegraphics[width=0.45\textwidth]{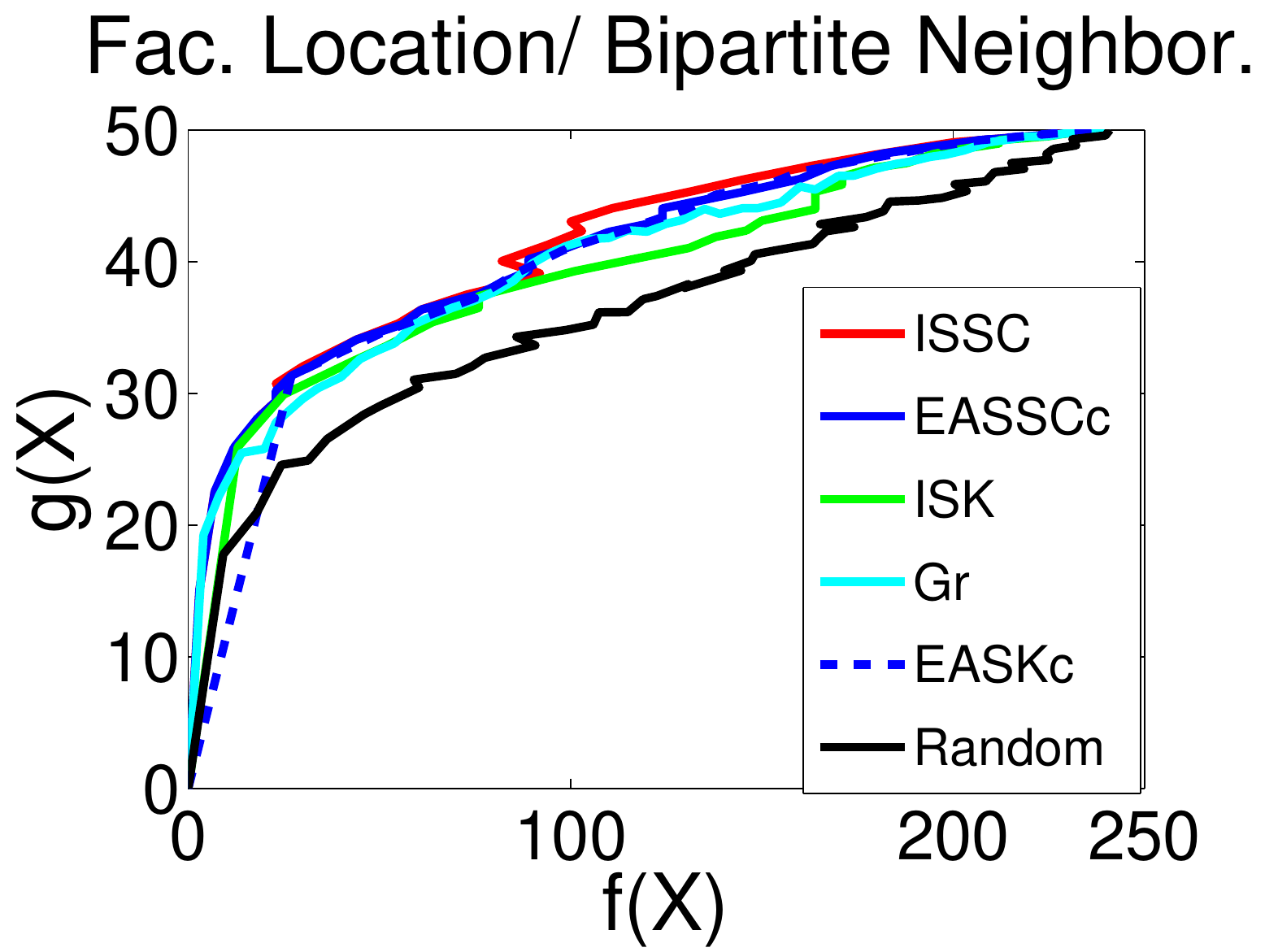}\hspace{-10pt} 
  ~ 
\includegraphics[width=0.45\textwidth]{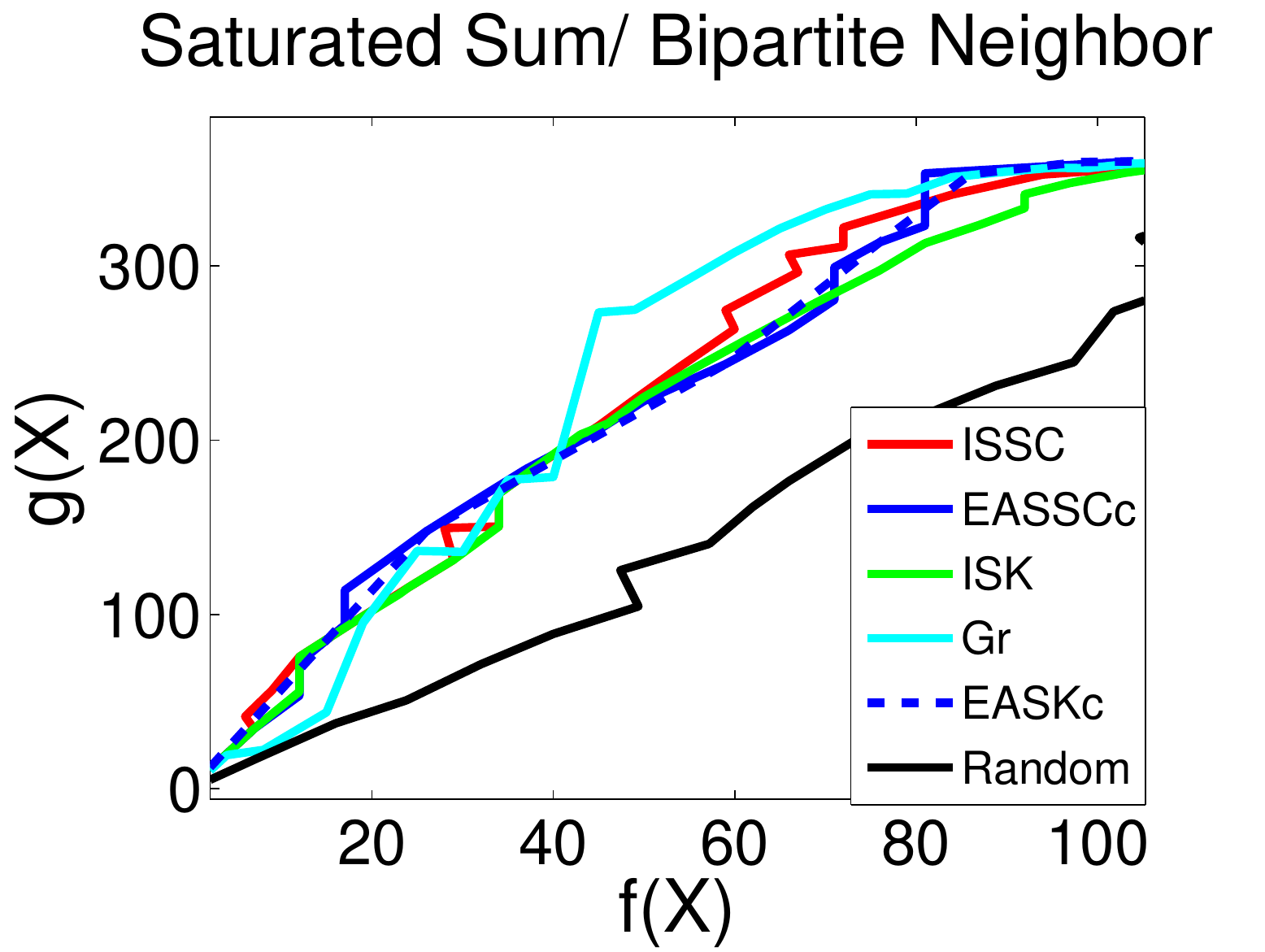} \hspace{-10pt} 
\caption{The two figures show the performance of the algorithms in the text on
  Problems 1 and 2 (in both cases, higher the better).
  \JTR{change figure legend, I changed MI to ISSC. Also EAk should be EAK,
    and KI needs to be changed as well.}
}
  \label{fig:sim}
\end{figure}}

\section{Experiments}
\label{sec:experiments}

In this section, we empirically compare the performance of the various
algorithms discussed in this paper. We are motivated by the speech
data subset selection application\arxivalt{~\cite{lin2009select,lin11,jegelkanips}}{~\cite{lin2009select,lin11}} with the submodular function $f$ encouraging limited
vocabulary while $g$ tries to achieve acoustic variability. A natural
choice of the function $f$ is a function of the form
$|\Gamma(X)|$, where $\Gamma(X)$ is the neighborhood function
on a bipartite graph constructed between the utterances and the words~\cite{lin11}. For the coverage function
$g$, we use two types of coverage: one is a facility
location function \arxivalt{
\begin{align}
g_1(X) = \sum_{i \in V} \max_{j \in X} s_{ij}
\end{align}
}{$g_1(X) = \sum_{i \in V} \max_{j \in X} s_{ij}$} while the other is a saturated sum function
\arxivalt{
\begin{align}
g_2(X) = \sum_{i \in V}\min\{\sum_{j \in X} s_{ij}, \alpha \sum_{j \in V} s_{ij}\}.
\end{align}
}{$g_2(X) = \sum_{i \in V}\min\{\sum_{j \in X} s_{ij}, \alpha \sum_{j \in V} s_{ij}\}$.} Both these functions are defined in terms of a similarity matrix
$\mathbf S = \{s_{ij}\}_{i, j \in V}$, which we define on the TIMIT corpus~\cite{timit}, using the string kernel metric~\cite{rousu2006efficient} for similarity. Since some of our algorithms, like the Ellipsoidal Approximations, are computationally intensive, we restrict ourselves to $50$ utterances.


\narxiv{
\captionsetup[figure]{font=small,skip=0pt}
\begin{wrapfigure}[8]{r}{0.5\textwidth}
\vspace{-2.5ex}
\begin{minipage}{0.5\textwidth}
\centering
\includegraphics[width=0.45\textwidth]{Sim2r.pdf}\hspace{-10pt} 
  ~ 
\includegraphics[width=0.45\textwidth]{Sim3r.pdf} \hspace{-10pt} 
\end{minipage}
\caption{\small{Comparison of the algorithms in the text.}}
\label{fig:sim}
\end{wrapfigure}}

We compare our different algorithms on Problems 1 and 2 with $f$
being the bipartite neighborhood and $g$ being the facility location
and saturated sum respectively. \arxiv{Since the primal and dual variants of the submodular set cover problem are similar, we just
use the primal variants of ISSC and
EASSC.} Furthermore, in our
experiments, we observe that the neighborhood function $f$ has a
curvature $\curvf{f} = 1$.  
Thus, it suffices to use the simpler versions of algorithm EA (i.e.,
algorithm EASSCc and EASKc). The results are shown in
Figure~\ref{fig:sim}. We observe that on the real-world instances, all
our algorithms perform almost comparably. This implies, moreover, that
the iterative variants, viz. Gr, ISSC and ISK, perform comparably to
the more complicated EA-based ones, although EASSC and EASK have
better theoretical guarantees. We also compare against a baseline of
selecting random sets (of varying cardinality), and we see that our
algorithms all perform much better.  In terms of the running time,
computing the Ellipsoidal Approximation for $|\Gamma(X)|$ with $|V| =
50$ takes about $5$ hours while all the iterative variants (i.e., Gr,
ISSC and ISK) take less than a second. This difference is much more
prominent on larger instances (for example $|V| = 500$). \looseness-1

\JTR{There needs to be an additional very simple discussion here on
  one-stop-shopping: i.e., you have a problem like Problem 1 or
  Problem 2. Based on the theoretical and empirical results we've got,
  state what we recommend for each of Problems 1 and 2.
  Looks like Gprimal for problem 1 and ISSC for problem 2, but maybe
a bit more empirical results could strengthen this recommendation.}\RTJ{Yes, I will be rerunning the experiments and will modify this then.}

\section{Discussions\arxiv{ and related work}}
In this paper, we propose a unifying framework for problems 1 and 2 based on suitable surrogate functions. We provide a number of iterative algorithms which are very practical and scalable (like Gr, ISK and ISSC), and also algorithms like EASSC and EASK, which though more intensive, obtain tight approximation bounds. Finally, we empirically compare our algorithms, and show that the iterative algorithms compete empirically with the more complicated and theoretically better approximation algorithms.\arxiv{

To our
knowledge, this paper provides the first general framework of approximation
algorithms for Problems 1 and 2 for monotone submodular functions $f$ and $g$. A number of papers, however, investigate related problems and approaches. For example,~\cite{fujishige2005submodular} investigates an exact algorithm for solving problem 1, with equality instead of inequality. However, since problem 1 subsumes the problem of minimizing a monotone submodular function subject to a cardinality equality constraint, and is hence NP hard~\cite{nagano2011}. Hence this algorithm in the worst case, must be exponential. Furthermore, a similar problem was considered in~\cite{krause06near} with one specific instance of a function $f$, which is not submodular. They also use, a considerably different algorithm. Also, an algorithm equivalent to the first iteration of ISSC was proposed in~\cite{wan2010greedy, du2011minimum} and ISSC not only generalizes this, but we also provide a more explicit approximation guarantee (we provide an elaborate discussion on this in the section describing ISSC). We also point out that, a special case of SCSK was considered in~\cite{lin2012submodularity}, with $f$ being submodular, and $g$ modular (we called this the \emph{submodular span problem}). The authors there use an algorithm very similar to Algorithm~\ref{alg:alg1}, to convert this problem into an instance of minimizing a submodular function subject to a knapsack constraint, for which they use the algorithm of~\cite{svitkina2008submodular}. Unfortunately, the algorithm of~\cite{svitkina2008submodular} does not scale very well. Our algorithms for this problem, on the other hand, would continue to scale very well in practice.

Similarly, a number of approximation algorithms
have been shown for Problem 0~\cite{rkiyeruai2012, narasimhanbilmes,
  kawahara2011prismatic}. The algorithms in~\cite{rkiyeruai2012,
  narasimhanbilmes} are scalable and practical, but lack theoretical
guarantees. The algorithm of~\cite{kawahara2011prismatic} though exact, employs a branch and bound technique which is often inefficient in
practice (the timing analysis from~\cite{kawahara2011prismatic} also
depicts that). These facts are not surprising, since problem 0 is not only NP hard but also inapproximable. Moreover, these algorithms are not comparable to ours,
since we directly model the hard constraints and our bicriteria
results give worst-case bounds on the deviation from the constraints
and the optimal solution. This is often important, since there are hard constraints in many practical applications (in the form of power constraints, or budget constraints). Casting it as Problem 0, however, no longer
has guarantees on the deviation from the constraints.

} For future work, we would like to empirically evaluate our algorithms on many of the real world problems described above, particularly the limited vocabulary data subset selection application for speech corpora, and the machine translation application.

{\bf Acknowledgments:} Special thanks to Kai Wei and Stefanie Jegelka for discussions, to Bethany Herwaldt for going through an early draft of this manuscript and to the anonymous reviewers for useful reviews. This material is based upon work supported by the National Science
Foundation under Grant No. (IIS-1162606), a
Google and a Microsoft award, and
by the Intel Science and
Technology Center for Pervasive Computing.\looseness-1
\notarxiv{\small}
\bibliographystyle{abbrv}
\bibliography{../Combined_Bib/submod}
\end{document}